\DeclareMathOperator*{\argmax}{argmax}
\DeclareMathOperator*{\E}{E} 
\newcommand{\supp}{\mathrm{supp}} 
\newcommand{\compilehidecomments}{false}
	\newcommand{\xiaoming}[1]{}
	\newcommand{\zhijie}[1]{}
	\newcommand{\jialin}[1]{}
	\newcommand{\shuo}[1]{}
	\newcommand{\xiaoming}[1]{{\color{blue!50!black}  [\text{Xiaoming:} #1]}}
	\newcommand{\zhijie}[1]{{\color{red!60!black} [\text{Zhijie:} #1]}}
	\newcommand{\jialin}[1]{{\color{brown!60!black} [\text{Jialin:} #1]}}
	\newcommand{\shuo}[1]{{\color{green!60!black} [\text{Shuo:} #1]}}
\newtheorem{lemma}{Lemma}
\newtheorem{theorem}{Theorem}
\newtheorem{definition}{Definition}
\newcommand*\samethanks[1][\value{footnote}]{\footnotemark[#1]}
\begin{document}
\title{Improved Deterministic Algorithms for Non-monotone Submodular Maximization}
%
%

\author{
	Xiaoming Sun\thanks{Institute of Computing Technology, Chinese Academy of Sciences, Beijing, China, and School of Computer Science and Technology, University of Chinese Academy of Sciences, Beijing, China.
E-mail:   
\{sunxiaoming, zhangjialin, zhangshuo19z\}@ict.ac.cn}
\and
Jialin Zhang\samethanks
\and
Shuo Zhang\samethanks
\and
Zhijie Zhang\thanks{Center for Applied Mathematics of Fujian Province, School of Mathematics and Statistics, Fuzhou University, Fuzhou, China.
E-mail: zzhang@fzu.edu.cn}
}


%
%
%
\maketitle              

\begin{abstract}
Submodular maximization is one of the central topics in combinatorial optimization. It has found numerous applications in the real world. In the past decades, a series of algorithms have been proposed for this problem. However, most of the state-of-the-art algorithms are randomized. There remain non-negligible gaps with respect to approximation ratios between deterministic and randomized algorithms in submodular maximization. 

In this paper, we propose deterministic algorithms with improved approximation ratios for non-monotone submodular maximization. Specifically, for the matroid constraint, we provide a deterministic $0.283-o(1)$ approximation algorithm, while the previous best deterministic algorithm only achieves a $1/4$ approximation ratio. For the knapsack constraint, we provide a deterministic $1/4$ approximation algorithm, while the previous best deterministic algorithm only achieves a $1/6$ approximation ratio. For the linear packing constraints with large widths, we provide a deterministic $1/6-\epsilon$ approximation algorithm. To the best of our knowledge, there is currently no deterministic approximation algorithm for the constraints.

\end{abstract}
%
%
%
\section{Introduction}

Submodular maximization refers to the problem of maximizing a \emph{submodular} set function under some specific constraint.
The submodularity of the objective functions captures the effect of \emph{diminishing returns} in the economy and therefore the problem has found numerous applications in the real world, including influence maximization \cite{KKT03}, sensor placement and feature selection \cite{IB12,IB13}, information gathering \cite{KG11} and machine learning \cite{DK08,KED+17}.

Due to its remarkable significance, submodular maximization has been studied over the past forty years.
The difficulty of the problem is different depending on whether the objective function is \emph{monotone}.

For the monotone case, it is well-known that the problem can not be approximated within a ratio better than $1-1/e$ even under the cardinality constraint \cite{NemhauserW78,Fei98}.
On the other hand, a natural greedy algorithm achieves a $1-1/e$ approximation ratio for the cardinality constraint \cite{NemhauserWF78} and $1/2$ ratio for the matroid constraint \cite{FisherNW78}.
The greedy algorithm may have an arbitrarily bad approximation ratio for the knapsack constraint, but by combining the enumeration technique, it can be augmented to be a $1-1/e$ approximation algorithm \cite{SAJ99,Sviridenko04}.
It has remained a longstanding open question whether the problem admits a $1-1/e$ approximation ratio for the matroid constraint.
In 2008, Vondr{\'{a}}k \cite{Vondrak08} answered this question affirmatively by proposing the famous \emph{continuous greedy} algorithm.
Later, the technique was further generalized and applied to other constraints such as the linear packing constraints \cite{KulikST09} and general constraints with the down-closed property \cite{FeldmanNS11}.
Unlike the greedy algorithm, the continuous greedy algorithm requires a sampling process and therefore is inherently a randomized algorithm.

For the non-monotone case, the problem can not be approximated within a ratio better than $0.491$ under the cardinality constraint and $0.478$ under the matroid constraint \cite{GharanV11}.
The best known algorithm for the aforementioned constraints achieves a $0.385$ approximation ratio \cite{BuchbinderF19}.
The algorithm is randomized since it applies the continuous greedy technique.

When randomness is not allowed, existing algorithms in the literature for the non-monotone case suffer from inferior approximation ratios.
Currently, the best deterministic algorithm for the cardinality constraint achieves a $1/e$ approximation ratio \cite{BuchbinderF18}. The best one for the matroid constraint has an approximation ratio of $1/4$ \cite{HCCW20}.
The best one for the knapsack constraint has an approximation ratio of $1/6$ \cite{GuptaRST10}.
For linear packing constraints, there are currently no deterministic algorithms with constant ratios to the best of our knowledge.

The gap between deterministic and randomized algorithms inspires us to design better deterministic algorithms for the problem.
This is of great interest for two reasons.
From a theoretical viewpoint, it is an interesting and important question whether randomness is essentially necessary for submodular maximization.
From a practical viewpoint, randomized algorithms only work well in the average case, while deterministic algorithms still work in the worst case.
In this sense, deterministic algorithms are more robust and hence more suitable in some fields such as the security domain.

\subsection{Our Contribution}

In this paper, we provide several improved deterministic algorithms for non-monotone submodular maximization subject to different constraints. 
\begin{itemize}
	\item For the matroid constraint, we present a deterministic $0.283-o(1)$ approximation algorithm with running time $\mathcal{O}(n^2 k^2)$.
	\item For the knapsack constraint, we present a deterministic $1/4$ approximation algorithm and a slightly faster deterministic $1/4-\epsilon$ approximation algorithm. 
	Our knapsack algorithm borrows the idea of the Simultaneous Greedy algorithm in \cite{AmanatidisKS19} and the Twin Greedy algorithm in \cite{HCCW20}.
	We also present a tight example showing that this approach can not reach an approximation ratio better than $1/4$ even under a cardinality constraint.
	Hence, our analysis for the algorithm, as well as the analysis in \cite{HCCW20}, is tight.
	\item For the linear packing constraints, we present a deterministic $1/6-\epsilon$ approximation algorithm when the width $W=\min \{b_i/A_{ij}\mid A_{ij}>0\}$ satisfies $W=\Omega(\ln m/\epsilon^2)$.
	Our algorithm can be adapted to solve the instances with arbitrary widths when the number of constraints $m$ is constant.
\end{itemize}
We make a more detailed comparison between our and previous results in Table \ref{tab: main}.

 \begin{table}[ht]
	 \centering
	 \begin{tabular}{ccccc}
		 \hline
		 Constraint & Reference & Ratio & Complexity & Type \\ \hline\hline
		 Matroid & Buchbinder et al. \cite{BuchbinderFNS14} & $0.283-o(1)$ & $\mathcal{O}(nk)$ & Rand \\ 
		 Matroid & Buchbinder and Feldman \cite{BuchbinderF19} & $0.385$ & $\mathrm{poly}(n)$ & Rand \\ 
		 Matroid & Mirzasoleiman et al. \cite{MBK16} & $1/6-\epsilon$ & $\mathcal{O}(nk+k/\epsilon)$ & Det \\ 
		 Matroid & Lee et al. \cite{LMNS10} & $1/4-\epsilon$ & $\mathcal{O}((n^4 \log n)/\epsilon)$ & Det \\ 
		 Matroid & Han et al. \cite{HCCW20} & $1/4$ & $\mathcal{O}(nk)$ & Det \\ 
		 Matroid & Han et al. \cite{HCCW20} & $1/4-\epsilon$ & $\mathcal{O}((n/\epsilon)\log(k/\epsilon))$ & Det \\ \hline
		 Matroid & Theorem \ref{thm: matroid} & $0.283-o(1)$ & $\mathcal{O}(n^2k^2)$ & Det \\
		 \hline\hline
		 Knapsack & Amanatidis et al. \cite{AFLLR20} & $0.171$ & $\mathcal{O}(n\log n)$ & Rand \\
		 Knapsack & Buchbinder and Feldman \cite{BuchbinderF19} & $0.385$ & $\mathrm{poly}(n)$ & Rand \\
		 Knapsack & Amanatidis et al. \cite{AmanatidisKS19} & $1/7$ & $\mathcal{O}(n^4)$ & Det \\
		 Knapsack & Gupta et al. \cite{GuptaRST10} & $1/6$ & $\mathcal{O}(n^5)$ & Det \\
		 \hline
		 Knapsack & Theorem \ref{thm: enumeration} & $1/4$ & $\mathcal{O}(n^4)$ & Det \\ 
		 Knapsack & Theorem \ref{thm: enumeration} & $1/4-\epsilon$ & $\mathcal{O}((n^3/\epsilon)\log(n/\epsilon))$ & Det \\ 
		 \hline
		 \hline
		 Packing & Buchbinder and Feldman \cite{BuchbinderF19} & $0.385$ & $\mathrm{poly}(n)$ & Rand \\
		 \hline
		 Packing & Theorem \ref{thm: Constant Repeat Greedy for Linear Packing} & $1/6-\epsilon$ & $\mathcal{O}(n^2)$ & Det \\ 
		 \hline
		 \end{tabular}
	 \caption{Approximation algorithms for non-monotone submodular maximization under a matroid, a knapsack and linear packing constraints. The packing constraints have either a constant $m$ or a large width. ``Complexity'' refers to query complexity.
		 ``Rand'' is short for ``Randomized'' and ``Det'' is short for ``Deterministic''.}
	 \label{tab: main}
	 \end{table}

\subsection{Related Work}

To better illustrate the improvement of our results, this subsection provides a list of results in the literature on \emph{non-monotone} submodular maximization under a matroid constraint, a knapsack constraint, and linear packing constraints.

For the matroid constraint, the best randomized algorithm is based on the continuous greedy technique and achieves a $0.385$ approximation ratio \cite{BuchbinderF19}.
However, this algorithm suffers from high query complexity.
There exist different randomized algorithms called Random Greedy that achieve $1/4$ and $0.283-o(1)$ approximation ratios using $\mathcal{O}(nk)$ queries \cite{BuchbinderFNS14}, where $n$ is the total number of the elements and $k$ is the rank of the matroid.
When randomness is not allowed, Mirzasoleiman et al.~\cite{MBK16} proposed a deterministic algorithm that achieves a $1/6 -\epsilon$ approximation ratio and uses $\mathcal{O}(nk+k/\epsilon)$.
Lee et al.~\cite{LMNS10} proposed a deterministic $1/4-\epsilon$ approximation algorithm via the local search technique.
The algorithm uses $\mathcal{O}((n^4 \log n)/\epsilon)$ queries.
The best deterministic algorithm in the literature is called Twin Greedy \cite{HCCW20}, which achieves a $1/4$ approximation ratio using $\mathcal{O}(nk)$ queries and a $1/4-\epsilon$ ratio using $\mathcal{O}((n/\epsilon)\log(k/\epsilon))$ queries.

For the knapsack constraint, the same randomized $0.385$ approximation algorithm \cite{BuchbinderF19} also works by using a large number of queries.
A different randomized algorithm \cite{AFLLR20} uses nearly linear queries at a cost of a $0.171$ approximation ratio.
The best deterministic algorithm for the problem achieves a $1/6$ approximation ratio and uses $\mathcal{O}(n^5)$ queries \cite{GuptaRST10}.

For the linear packing constraints, no constant approximation is possible when the number of constraints $m$ is an input.
When $m$ is constant or the width of the constraints is large, constant approximation is possible \cite{LMNS10,SMM11,BuchbinderF19}.
Currently, the best known algorithm is again based on the continuous greedy algorithm and has an approximation ratio of $0.385$ \cite{BuchbinderF19}.
To the best of our knowledge, there is no deterministic algorithm for the problem in the literature.

\subsection{Organization}

In Section \ref{sec: pre}, we formally introduce the problem of non-monotone submodular maximization under a matroid constraint and a knapsack constraint. 
In Section \ref{sec: matroid}, we propose a deterministic $0.283-o(1)$ approximation algorithm for the matroid constraint. 
In Section \ref{sec: knapsack}, we propose deterministic algorithms for the knapsack constraint with different approximation ratios and query complexity.
In Section \ref{sec: Linear Packing}, we propose a deterministic $1/6-\epsilon$ approximation algorithm for the linear packing constraints when the width of the constraints is large.
In Section \ref{sec: conclusion}, we conclude the paper and list some future directions.

\section{Preliminaries}
\label{sec: pre}

In this section, we state the problems studied in this paper.

\begin{definition}[Submodular Function]
	Given a finite ground set $N$ of $n$ elements, a set function $f:2^N\mapsto \mathbb{R}$ is submodular if for all $S,T\subseteq N$,
	\[ f(S)+f(T) \geq f(S\cup T)+ f(S\cap T). \]
	Equivalently, $f$ is submodular if for all $S\subseteq T\subseteq N$ and $u\in N\setminus T$,
	\[ f(S\cup\{u\})-f(S)\geq f(T\cup\{u\})-f(T). \]
\end{definition}
For convenience, we use $f(S+u)$ to denote $f(S\cup\{u\})$, $f(u\mid T)$ to denote the marginal value $f(T+u)-f(T)$ of $u$ with respect to $T$, and $f(S\mid T)$ to denote the marginal value $f(S\cup T)-f(T)$.
The function $f$ is \emph{non-negative} if $f(S)\geq 0$ for all $S\subseteq N$.
$f$ is \emph{monotone} if $f(S)\leq f(T)$ for all $S\subseteq T\subseteq N$.


\begin{definition}[Matroid]
	A matroid system $\mathcal{M}=(N,\mathcal{I})$ consists of a finite ground set $N$ and a collection $\mathcal{I}\subseteq 2^N$ of the subsets of $N$, which satisfies the following three properties.
	\begin{itemize}
		\item $\emptyset \in \mathcal{I}$.
		\item If $A\subseteq B$ and $B\in \mathcal{I}$, then $A\in \mathcal{I}$.
		\item If $A,B\in\mathcal{I}$ and $|A|<|B|$, then there exists an element $u\in B\setminus A$ such that $A \cup \{u\} \in \mathcal{I}$.
	\end{itemize}
\end{definition}
For a matroid system $\mathcal{M}=(N,\mathcal{I})$, each $A\in \mathcal{I}$ is called an \emph{independent set}.
If $A$ is additionally maximal inclusion-wise, it is called a \emph{base}.
All bases of a matroid have an equal size, which is known as the \emph{rank} of the matroid.
In this paper, we use $k$ to denote the rank of a matroid.

\begin{definition}[Knapsack]
	Given a finite ground set $N$, assume there is a budget $B$, and each element $u\in N$ is associated with a cost $c(u)>0$.
	For set $S\subseteq N$, its cost $c(S)=\sum_{u\in S}c(u)$.
	We say $S$ is feasible if $c(S)\leq B$.
	The knapsack can be written as $\mathcal{I}=\{S\mid c(S)\leq B\}$.
\end{definition}
If $c(u)=1$ for all $u\in N$ in the knapsack and let $B=k$, the knapsack reduces to $\mathcal{I}=\{S\mid |S|\leq k\}$.
This is called the \emph{cardinality constraint} or the \emph{uniform matroid}, as it satisfies the definition of a matroid.

\begin{definition}[Linear Packing Constraints]
	Given a finite ground set $N$, a matrix $A\in [0, 1]^{m\times n}$, and a vector $b\in [1,\infty)^m$. For set $S \subseteq N $, the linear packing constraints can be written as $\mathcal{I}=\{S\mid Ax_S\leq b\}$, where $x_S$ stands for the characteristic vector of the set $S$.
	Let $W=\min \{b_i/A_{ij}\mid A_{ij}>0\}$.
	It is known as the \emph{width} of the packing constraints.
\end{definition}
When $m=1$, the linear packing constraints reduce to the knapsack constraint.

\begin{definition}[Constrained Submodular Maximization]
	The constrained submodular maximization problem has the form
	\[ \max\{f(S)\mid S\in\mathcal{I}\}. \]
\end{definition}
In this paper, the constraint $\mathcal{I}$ is assumed to be a matroid constraint, a single knapsack constraint, or linear packing constraints respectively. The objective function $f$ is assumed to be non-negative, non-monotone, and submodular.
Besides, $f$ is accessed by a value oracle that returns $f(S)$ when $S$ is queried.
The efficiency of any algorithm for the problem is measured by the number of queries it uses.

\section{Deterministic Approximation for Matroid Constraint}
\label{sec: matroid}

In this section, we present a deterministic $(0.283-o(1))$-approximation algorithm for submodular maximization under a matroid constraint.
Our algorithm is obtained by derandomizing the Random Greedy algorithm in \cite{BuchbinderFNS14}, using the technique from \cite{BuchbinderF18}.

To gain some intuition, we first describe the Random Greedy algorithm in \cite{BuchbinderFNS14}.
For convenience, we add a set $D$ of $2k$ ``dummy elements'' to the original instance $(N,f,\mathcal{M})$ to obtain a new instance $(N',f',\mathcal{M}')$ in the following way.
\begin{itemize}
	\item $N'=N\cup D$.
	\item $f'(S)=f(S\setminus D)$ for every set $S\subseteq N'$.
	\item $S\in\mathcal{I}'$ if and only if $S\setminus D\in\mathcal{I}$ and $|S|\leq k$.
\end{itemize}
Clearly, the new instance and the old one refer to the same problem.
Thus, in the remaining part of this section, we assume that any instance always contains the dummy elements as defined above.
The existence of dummy elements allows us to assume that the optimal solution is a base of $\mathcal{M}$ by adding dummy elements to it if it is not initially.
Another ingredient of the Random Greedy algorithm is the well-known exchange property of matroids, which is stated in Lemma \ref{lem: l1}.
\begin{lemma}[\cite{Schrijver03}]
	\label{lem: l1}
	If $A$ and $B$ are two bases of a matroid $\mathcal{M}=(N,\mathcal{I})$, then there exists a one-to-one function $g:A\rightarrow B$ such that
	\begin{itemize}
		\item $g(u)=u$ for every $u\in A\cap B$.
		\item for every $u\in A$, $B\cup\{u\}\setminus\{g(u)\}\in\mathcal{I}$.
	\end{itemize}
\end{lemma}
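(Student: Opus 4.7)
The plan is to derive the lemma from the \emph{strong basis exchange property} and then iterate.

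First, I would set $g(u)=u$ for every $u\in A\cap B$; this forces $B\cup\{u\}\setminus\{g(u)\}=B\setminus\{u\}$, which is independent as a subset of the base $B$. Because bases of a matroid have equal size, $|A\setminus B|=|B\setminus A|$, so what remains is to produce a bijection from $A\setminus B$ onto $B\setminus A$ satisfying the exchange condition.

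The key tool is the strong exchange property: for any two bases $A',B'$ of $\mathcal{M}$ and any $u\in A'\setminus B'$, there exists $v\in B'\setminus A'$ such that both $A'-u+v$ and $B'-v+u$ are bases. I would prove this via circuit-cocircuit orthogonality. The fundamental circuit $C(u,B')\subseteq B'+u$ and the fundamental cocircuit $C^{*}(u,A')\subseteq(N\setminus A')+u$ both contain $u$; since a circuit and a cocircuit of a matroid cannot intersect in exactly one element, there exists $v\in C(u,B')\cap C^{*}(u,A')$ with $v\neq u$. The defining property of $C(u,B')$ gives $v\in B'$ together with $B'-v+u\in\mathcal{I}$, and the defining property of $C^{*}(u,A')$ gives $v\notin A'$ together with $A'-u+v\in\mathcal{I}$; hence $v\in B'\setminus A'$ witnesses strong exchange.

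With strong exchange available, I would iterate to define $g$ on $A\setminus B$. Enumerate $A\setminus B=\{u_{1},\dots,u_{t}\}$ and set $A_{0}=A$. At step $i$ apply strong exchange to $(A_{i-1},B)$ at $u_{i}$ to obtain $v_{i}\in B\setminus A_{i-1}$ with both $A_{i}:=A_{i-1}-u_{i}+v_{i}$ a base and $B-v_{i}+u_{i}\in\mathcal{I}$, then set $g(u_{i})=v_{i}$. A short induction shows $A_{i}\setminus B=(A\setminus B)\setminus\{u_{1},\dots,u_{i}\}$ and $B\setminus A_{i}=(B\setminus A)\setminus\{v_{1},\dots,v_{i}\}$, so the $v_{i}$ are pairwise distinct and exhaust $B\setminus A$ after $t$ steps, making $g\colon A\setminus B\to B\setminus A$ a bijection. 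The required condition $B\cup\{u_{i}\}\setminus\{g(u_{i})\}=B-v_{i}+u_{i}\in\mathcal{I}$ is exactly what strong exchange guarantees at step $i$.

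The main obstacle is the circuit-cocircuit orthogonality $|C\cap C^{*}|\neq 1$ underlying strong exchange; this is classical matroid theory and is treated in the same reference (Schrijver) already cited in the statement. Once that is in place, the iteration above is routine bookkeeping.
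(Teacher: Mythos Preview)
Your argument is correct. The paper itself does not prove this lemma at all: it is quoted verbatim from Schrijver's book with the citation \cite{Schrijver03}, and is used as a black box. So there is no ``paper's own proof'' to compare against.

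That said, it is worth noting that the paper implicitly points to a different (and equally standard) proof route. Immediately after invoking the lemma, the paper remarks that the mappings $g_{i,S}$ can be computed ``by invoking a perfect matching search algorithm for bipartite matching.'' This reflects the classical alternative: form the bipartite graph on parts $A\setminus B$ and $B\setminus A$ with an edge $(u,v)$ whenever $B-v+u\in\mathcal{I}$, and show via Hall's condition (using ordinary basis exchange together with the rank inequalities) that this graph has a perfect matching. Your approach instead uses the \emph{strong} basis exchange property and an explicit greedy iteration, which is a bit more constructive and avoids Hall's theorem at the cost of needing circuit--cocircuit orthogonality. Both are textbook arguments; yours is perfectly adequate here.

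One tiny cosmetic point: for $u\in A\cap B$ you write $B\cup\{u\}\setminus\{g(u)\}=B\setminus\{u\}$, which is of size $k-1$ rather than $k$. This is fine because the lemma only asks for membership in $\mathcal{I}$, not for a base, and indeed the paper later relies only on independence of $S+u-g_{i,S}(u)$.
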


The Random Greedy algorithm is depicted as Algortihm \ref{alg: random greedy for matroid}. It was shown in \cite{BuchbinderFNS14} to be a $(0.283-o(1))$-approximation algorithm.

\begin{algorithm}[ht]
	\begin{algorithmic}[1]
		\State \textbf{Input} $N,f,\mathcal{M}$.
		\State Initialize $S_0$ to be an arbitrary base containing only dummy elements in $D$.
		
		\For{$i=1$ to $k$}
		\State Let $M_i\subseteq N\setminus S_{i-1}$ be a base of $\mathcal{M}$ containing only elements of $N\setminus S_{i-1}$ and maximizing $\sum_{u\in M_i} f(u\mid S_{i-1})$.
		\State	Let $g_i$ be a function mapping each element of $M_i$ to an element of $S_{i-1}$ such that $S_{i-1}+u-g_i(u)\in\mathcal{I}$ for every $u\in M_i$.
		\State	Let $u_i$ be a uniformly random element from $M_i$.
		\State $S_i\gets S_{i-1}+u_i-g_i(u_i)$.
		\EndFor
		\State \textbf{return} $S_k$.
		\caption{Random Greedy for Matroid \cite{BuchbinderFNS14}}
		\label{alg: random greedy for matroid}
	\end{algorithmic}
\end{algorithm}

We next explain how to use the technique from \cite{BuchbinderF18} to derandomize Algorithm \ref{alg: random greedy for matroid}.
The resulting algorithm is presented as Algorithm \ref{alg: deterministic matroid}.
It mimics Algorithm \ref{alg: random greedy for matroid} by explicitly maintaining a distribution $\mathcal{D}_i$ in each iteration $i$, which is supported on the subsets that the algorithm will visit.
The key point of Algorithm \ref{alg: deterministic matroid} is that it ensures that the size of the support of $\mathcal{D}_i$ grows polynomially with $i$ instead of exponentially as in Algorithm \ref{alg: random greedy for matroid}.

Formally, we use a multiset of pairs $\{(p_j,S_j)\}$ to represent distribution $\mathcal{D}_i$, where $S_j$ is a subset and $p_j$ is a probability.
In this representation, we allow $p_j=p_{j'}$, $S_j=S_{j'}$ or both for $j\neq j'$.
The sum of $p_j$ equals to $1$.
The support of $\mathcal{D}_i$ consists of distinct $S_j$ and is denoted by $\supp(\mathcal{D}_i)$.
The number of pairs in $\mathcal{D}_i$ is denoted by $|\mathcal{D}_i|$.
Clearly, $|\supp(\mathcal{D}_i)|\leq |\mathcal{D}_i|$.

Algorithm \ref{alg: deterministic matroid} updates the distribution and constructs $\mathcal{D}_i$ from $\mathcal{D}_{i-1}$ by finding an extreme-point solution of the linear program \eqref{lp}.
The variables of \eqref{lp} are $x(u,S)$ for $u\in M_i$ and $S\in\supp(\mathcal{D}_{i-1})$, which can be interpreted as the probability of adding $u$ into $S$.
The constraints of \eqref{lp} can be divided into six groups.
The last two groups ensure that $x(\cdot,S)$ is a legal probability vector for each $S\in\supp(\mathcal{D}_{i-1})$.
The first constraint guarantees that the addition of those $u$ to $S$ with positive $x(u,S)$ attains an average marginal value of elements in $M_i$.
The second constraint guarantees that the removal of $g_{i,S}(u)$ from $S$ causes a loss of at most an average marginal value.
The third group of constraints says that any element outside $S$ will be added into $S$ with probability at most $1/k$ and the fourth group of constraints says that any element in $S$ will be kicked out with probability at least $1/k$.
These constraints characterize the requirements for the elements to be selected in an iteration.
It is easy to see that $x(u,S)=1/k$ for every $S\in\supp(\mathcal{D}_{i-1})$ and $u\in M_i$ is a feasible solution of \eqref{lp} and this is exactly what the Random Greedy algorithm does.
However, this causes the size of the support of $\mathcal{D}_i$ to grow exponentially with $i$.
By choosing extreme-point solutions of \eqref{lp}, Algorithm \ref{alg: deterministic matroid} can ensure that it grows polynomially rather than exponentially.

There are two details about Algorithm \ref{alg: deterministic matroid} to which we need to pay attention.
First, for $A=M_i$ and $B=S$, let $g_{i,S}$ be the mapping defined in Lemma \ref{lem: l1}.
In line \ref{line: mapping}, Algorithm \ref{alg: deterministic matroid} needs to construct such mappings explicitly.
By invoking a perfect matching search algorithm for bipartite matching, they can be found in polynomial time.
Second, in line \ref{line: new dist}, if $u\in S$, $S+u-g_{i,S}(u)$ reduces to $S-g_{i,S}(u)$.

\begin{algorithm}[ht]
	\begin{algorithmic}[1]
		\State \textbf{Input} $N,f,\mathcal{M}$.
		\State Initialize $\mathcal{D}_0=\{(1,S)\}$, where $S$ is an arbitrary base containing only dummy elements in $D$.
		\For{$i=1$ to $k$}
		\State	Let $M_i\subseteq N$ be a base of $\mathcal{M}$ maximizing $\sum_{u\in M_i} \E_{S\sim\mathcal{D}_{i-1}}[f(u\mid S)]$.
		\State	Construct mapping $g_{i,S}$ for every $S\in\supp(\mathcal{D}_{i-1})$, where $g_{i,S}$ denotes the mapping defined in Lemma \ref{lem: l1} by plugging $A=M_i$ and $B=S$.\label{line: mapping}
		\State	Find an \textbf{extreme point} solution of the following linear formulation:
		\begin{alignat*}{2}
				\sum_{u\in M_i} \E_{S\sim\mathcal{D}_{i-1}}[x(u,S)\cdot f(u\mid S)] &\geq \frac{1}{k}\sum_{u\in M_i} \E_{S\sim\mathcal{D}_{i-1}}[f(u\mid S)] & \tag{P} \label{lp} \\
				\sum_{u\in M_i} \E_{S\sim\mathcal{D}_{i-1}}[x(u,S)\cdot f(g_{i,S}(u)\mid S\setminus\{g_{i,S}(u)\})] &\leq \frac{1}{k} \sum_{u\in M_i} \E_{S\sim\mathcal{D}_{i-1}}[f(g_{i,S}(u)\mid S\setminus\{g_{i,S}(u)\})]  & \\
				\E_{S\sim\mathcal{D}_{i-1}}[x(u,S)\cdot\mathbf{1}[u\notin S]] &\leq \frac{1}{k} \Pr_{S\sim\mathcal{D}_{i-1}}[u\notin S],\quad\forall u\in M_i & \\
				\E_{S\sim\mathcal{D}_{i-1}}[x(g_{i,S}^{-1}(u),S)\cdot\mathbf{1}[u\in S]] &\geq \frac{1}{k} \Pr_{S\sim\mathcal{D}_{i-1}}[u\in S], \quad\forall u\in N & \\
				\sum_{u\in M_i} x(u,S)&=1, \quad\forall S\in\supp(\mathcal{D}_{i-1}) &  \\
				x(u,S)&\geq 0, \quad\forall u\in M_i, S\in\supp(\mathcal{D}_{i-1}) &
		\end{alignat*}
		\State Construct a new distribution:\label{line: new dist}
		\[ \mathcal{D}_i\gets \{(x(u,S)\cdot \Pr_{\mathcal{D}_{i-1}}[S],S+u-g_{i,S}(u))\mid u\in M_i, S\in\supp(\mathcal{D}_{i-1}),x(u,S)>0\}. \]
		\EndFor
		\State \textbf{return} $\argmax_{S\in\supp(\mathcal{D}_k)} f(S)$.
		\caption{Derandomization of Random Greedy for Matroid}
		\label{alg: deterministic matroid}
	\end{algorithmic}
\end{algorithm}

We now introduce several useful properties that Algorithm \ref{alg: deterministic matroid} can guarantee.

\begin{lemma}[\cite{BuchbinderFNS14}]
	\label{lem: l2}
	Assume that $f$ is submodular. For any subset $T$ and a distribution $\mathcal{D}$,
	\[ \E_{S\sim\mathcal{D}}[f(T\cup S)]\geq f(T)\cdot\min_{u\in N}\Pr_{S\sim\mathcal{D}}[u\notin S]. \]
\end{lemma}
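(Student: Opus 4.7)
The plan is to combine Jensen's inequality for the Lovász extension with the non-negativity of $f$.

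First, I would reduce to a cleaner form. Define $g : 2^N \to \mathbb{R}_{\geq 0}$ by $g(X) := f(T \cup X)$. Then $g$ is non-negative and submodular, and $g(\emptyset) = f(T)$. Writing $q := \max_{u \in N} \Pr_{S \sim \mathcal{D}}[u \in S] = 1 - \min_{u \in N} \Pr_{S \sim \mathcal{D}}[u \notin S]$, the claim becomes equivalent to $\E_{S \sim \mathcal{D}}[g(S)] \geq (1-q)\, g(\emptyset)$.

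Next, I would invoke the Lovász extension $\hat g : [0,1]^N \to \mathbb{R}$. Because $g$ is submodular, $\hat g$ is convex, and it agrees with $g$ on indicator vectors: $\hat g(\mathbf{1}_S) = g(S)$. Applying Jensen's inequality to the random vector $\mathbf{1}_S$ with $S \sim \mathcal{D}$ gives
\[
\E_{S \sim \mathcal{D}}[g(S)] \;=\; \E_{S \sim \mathcal{D}}\bigl[\hat g(\mathbf{1}_S)\bigr] \;\geq\; \hat g\bigl(\E_{S \sim \mathcal{D}}[\mathbf{1}_S]\bigr) \;=\; \hat g(\mathbf{q}),
\]
where the marginal vector $\mathbf{q} \in [0,1]^N$ has coordinates $q_u := \Pr_{S \sim \mathcal{D}}[u \in S] \leq q$.

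Then I would evaluate $\hat g(\mathbf{q})$ using the super-level-set formula $\hat g(\mathbf{q}) = \int_0^1 g(L_t)\, dt$ with $L_t := \{u \in N : q_u \geq t\}$. Since every coordinate of $\mathbf{q}$ is at most $q$, the set $L_t$ is empty for $t \in (q,1]$, so that
\[
\hat g(\mathbf{q}) \;=\; \int_0^q g(L_t)\, dt \;+\; \int_q^1 g(\emptyset)\, dt \;=\; \int_0^q g(L_t)\, dt \;+\; (1-q)\, g(\emptyset) \;\geq\; (1-q)\, g(\emptyset),
\]
where the last inequality uses $g \geq 0$. Chaining the two displayed inequalities yields $\E_{S \sim \mathcal{D}}[f(T \cup S)] \geq (1-q)\, f(T) = f(T) \cdot \min_{u \in N} \Pr_{S \sim \mathcal{D}}[u \notin S]$, as desired.

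The step deserving the most care is the super-level-set representation of $\hat g$ and the observation that non-negativity of $f$ is precisely what kills the $[0,q]$ portion of the integral; given these, the proof is a one-liner. An alternative, more elementary route is to prove the statement directly by induction on the support size of $\mathcal{D}$, using the submodular smoothing operation that replaces two atoms $(p_1, A_1), (p_2, A_2)$ by $(p_1 - \varepsilon, A_1), (p_2 - \varepsilon, A_2), (\varepsilon, A_1 \cup A_2), (\varepsilon, A_1 \cap A_2)$—which preserves marginals and, by submodularity, does not increase $\E[g(S)]$—and then reducing to a configuration where the bound is immediate from $g \geq 0$. This elementary route is bookkeeping-heavy compared with the Lovász-extension argument, which is why I would prefer the plan above.
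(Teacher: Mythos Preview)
Your argument is correct. Note, however, that the paper itself does not supply a proof of this lemma: it is quoted from \cite{BuchbinderFNS14} and used as a black box, so there is no in-paper proof to compare against.

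As a standalone proof, your Lov\'asz-extension route is clean and valid. Convexity of $\hat g$ (equivalent to submodularity of $g$) justifies the Jensen step $\E_{S\sim\mathcal{D}}[g(S)]\ge \hat g(\mathbf{q})$, and the threshold-integral representation $\hat g(\mathbf{q})=\int_0^1 g(L_t)\,dt$ together with $q_u\le q$ for all $u$ gives $\hat g(\mathbf{q})\ge (1-q)\,g(\emptyset)$ once you drop the non-negative term $\int_0^q g(L_t)\,dt$. The one hypothesis you use beyond the lemma's stated ``$f$ is submodular'' is $f\ge 0$; this is a standing assumption in the paper (Section~\ref{sec: pre}) and is genuinely necessary for the inequality, so there is no gap. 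Your alternative smoothing/induction sketch is also a viable path and is closer in spirit to how such lemmas are often proved elementarily, but the Lov\'asz-extension argument you chose is the more transparent of the two.
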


\begin{lemma}
	For every iteration $i=1,2,\ldots,k$ of Algorithm \ref{alg: deterministic matroid}, the following properties hold: 
	\begin{enumerate}
		\item The assignment $y(u,S)=1/k$ for every $u\in M_i$ and $S\in\supp(\mathcal{D}_{i-1})$ is a feasible solution of \eqref{lp}.
		\item The sum of the probabilities in $\mathcal{D}_i$ equals to $1$, and therefore $\mathcal{D}_i$ is a valid distribution.
		\item $|\mathcal{D}_i|\leq n+3k+2+|\mathcal{D}_{i-1}|$. Thus, $|\mathcal{D}_i|\leq ni+3ki+2i+1=\mathcal{O}(ni)$.
	\end{enumerate}
\end{lemma}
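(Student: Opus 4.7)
The plan is to dispatch claims (1) and (2) by direct verification and then concentrate on claim (3), which carries the real content.

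For claim (1), I plug $y(u,S)=1/k$ into each constraint of \eqref{lp}. The first two inequalities become equalities since each side is the same sum scaled by $1/k$. The fifth constraint $\sum_{u\in M_i} y(u,S)=1$ holds because $|M_i|=k$ (as $M_i$ is a base of the rank-$k$ matroid), and non-negativity is trivial. For the third constraint, the substitution gives $(1/k)\Pr_{S\sim\mathcal{D}_{i-1}}[u\notin S]$ on both sides. The only subtle point is the fourth constraint, which involves $g_{i,S}^{-1}(u)$: since both $M_i$ and every $S\in\supp(\mathcal{D}_{i-1})$ are bases of $\mathcal{M}$ (all of size $k$), Lemma~\ref{lem: l1} guarantees that $g_{i,S}$ is a bijection from $M_i$ to $S$, so $g_{i,S}^{-1}(u)$ is well-defined for every $u\in S$ and we can set $y(g_{i,S}^{-1}(u),S)=1/k$ to obtain the desired equality. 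For claim (2), I interchange summations and invoke the fifth constraint:
\[
\sum_{u\in M_i}\sum_{S\in\supp(\mathcal{D}_{i-1})} x(u,S)\,\Pr_{\mathcal{D}_{i-1}}[S] \;=\; \sum_{S}\Pr_{\mathcal{D}_{i-1}}[S]\Bigl(\sum_{u\in M_i}x(u,S)\Bigr) \;=\; \sum_{S}\Pr_{\mathcal{D}_{i-1}}[S] \;=\; 1.
\]

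For claim (3), the plan is to apply the standard basic-feasible-solution theory for linear programs. Let $s=|\supp(\mathcal{D}_{i-1})|$. The LP \eqref{lp} has $V=|M_i|\cdot s=k s$ variables and the following constraints besides non-negativity: $2$ scalar inequalities (groups one and two), $k$ inequalities in group three (one per $u\in M_i$), $n+2k$ inequalities in group four (one per element of the augmented ground set, which contains $2k$ dummies), and $s$ equalities in group five. At an extreme point, $V$ linearly independent constraints are tight; the $s$ equalities always are, and the remaining $V-s$ must come from the other inequalities and the non-negativity constraints, where a tight non-negativity constraint corresponds exactly to a zero variable. Hence
\[
|\mathcal{D}_i| \;=\; \#\{(u,S):x(u,S)>0\} \;\leq\; s + 2 + k + (n+2k) \;=\; n+3k+2+s \;\leq\; n+3k+2+|\mathcal{D}_{i-1}|,
\]
and unrolling the recurrence with $|\mathcal{D}_0|=1$ yields $|\mathcal{D}_i|\leq (n+3k+2)i+1=\mathcal{O}(ni)$ since $k\leq n$.

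The main obstacle lies in the bookkeeping for claim (3): remembering that after adding dummy elements the ground set has $n+2k$ elements (which is what produces the $3k$ rather than $k$ in the bound), and being careful to count equality constraints correctly when converting ``at most $V$ linearly independent tight constraints'' into ``at most this many non-zero variables.'' Once those two accounting points are clear, every step reduces to routine linear-algebraic reasoning.
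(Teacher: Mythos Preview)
Your proof is correct and follows essentially the same approach as the paper: direct verification for (1) and (2), and for (3) the standard extreme-point argument that the number of positive variables at a vertex is bounded by the number of non-trivial constraints, counted as $2+k+(n+2k)+|\supp(\mathcal{D}_{i-1})|=n+3k+2+|\supp(\mathcal{D}_{i-1})|$. Your write-up is in fact more explicit than the paper's on two points---why $g_{i,S}^{-1}$ is well-defined (via the bijection from Lemma~\ref{lem: l1}, using that each $S\in\supp(\mathcal{D}_{i-1})$ is a base) and why the fourth group contributes $n+2k$ constraints (the dummies)---both of which the paper leaves implicit.
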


\begin{proof}
	We prove the lemma by induction on $i$.
	Assume that it holds for every $1\leq i'<i$.
	First, the correctness of the first property can be verified by directly plugging the assignment into \eqref{lp}.
	Next, notice that the sum of the probabilities in $\mathcal{D}_i$ is
	\[ \sum_{S\in\supp(\mathcal{D}_i)}\Pr_{\mathcal{D}_{i-1}}[S]\sum_{u\in M_i} x(u,S)=\sum_{S\in\supp(\mathcal{D}_i)}\Pr_{\mathcal{D}_{i-1}}[S]=1. \]
	Thus, the second property holds.
	Finally, recall that for an extreme point solution $x$, the number of variables $x(u,S)$ that are strictly greater than zero is no more than the number of (tight) constraints.
	Besides, observe that the number of constraints in \eqref{lp} at iteration $i$ is at most $n+3k+2+|\supp(\mathcal{D}_{i-1})|\leq n+3k+2+|\mathcal{D}_{i-1}|$.
	Since $S+u-g_{i,S}(u)$ is added to $\mathcal{D}_i$ only when $x(u,S)>0$, the size of $\mathcal{D}_i$ is also no more than $n+3k+2+|\mathcal{D}_{i-1}|$.
\end{proof}

\begin{lemma}
	\label{lem: prob of u not in S}
	For every element $u\in N\setminus D$ and $0\leq i\leq k$,
	\[ \Pr_{S\sim\mathcal{D}_i}[u\notin S]\geq \frac{1}{2}\left(1+\left(1-\frac{2}{k}\right)^i\right). \]
\end{lemma}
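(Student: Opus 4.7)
The plan is to prove the lemma by induction on $i$. The base case $i=0$ is immediate: $\mathcal{D}_0$ is supported on a single subset of the dummy set $D$, so $\Pr[u\notin S_0]=1$ for every $u\in N\setminus D$, matching $\tfrac{1}{2}(1+1)=1$.

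For the inductive step, my strategy is to derive the one-step recursion
\[
p_i(u) \;\geq\; \left(1-\tfrac{2}{k}\right)p_{i-1}(u)+\tfrac{1}{k},
\]
where $p_i(u):=\Pr_{S\sim\mathcal{D}_i}[u\notin S]$. Rewriting this as $p_i(u)-\tfrac{1}{2}\geq\left(1-\tfrac{2}{k}\right)\bigl(p_{i-1}(u)-\tfrac{1}{2}\bigr)$ and iterating from $p_0(u)=1$ would yield the claimed bound $p_i(u)\geq\tfrac{1}{2}(1+(1-\tfrac{2}{k})^i)$. To establish the recursion, I would express $\Pr[u\in S_i]$ via the transition $S'=S+v-g_{i,S}(v)$ dictated by the LP solution: $u\in S'$ occurs precisely when $v=u$ (addition, requiring $u\in M_i\setminus S$) or when $u\in S$ and $g_{i,S}(v)\neq u$ (retention). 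Constraint~(3) of the LP upper bounds the expected addition contribution $\E_S[\mathbf{1}[u\notin S]\,x(u,S)]$ by $\tfrac{1}{k}\Pr[u\notin S]$ (applicable when $u\in M_i$), while constraint~(4) lower bounds the expected removal contribution $\E_S[\mathbf{1}[u\in S]\,x(g_{i,S}^{-1}(u),S)]$ by $\tfrac{1}{k}\Pr[u\in S]$. Combining these bounds so that $u$ is subject simultaneously to an addition rate of $\tfrac{1}{k}$ (when $u\notin S$) and a removal rate of $\tfrac{1}{k}$ (when $u\in S$) produces the desired $(1-\tfrac{2}{k})$ factor, paralleling the analysis of Random Greedy in \cite{BuchbinderFNS14}.

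The main obstacle I anticipate is the case $u\in M_i\cap S$, in which the identity clause of Lemma~\ref{lem: l1} forces $g_{i,S}(u)=u$, so that choosing $v=u$ becomes a no-op contributing to neither addition nor removal. A naive case split on whether $u\in M_i$ would lose a factor and only yield the weaker coefficient $(1-\tfrac{1}{k})$. To close this gap I would exploit the fact that constraints~(3) and~(4) are automatically tight: summing constraint~(4) over $u\in N$ gives $\E_S[\sum_{v\in M_i}x(v,S)]=1$ on the left (since $g_{i,S}:M_i\to S$ is a bijection so $\sum_{u\in S}x(g_{i,S}^{-1}(u),S)=\sum_{v\in M_i}x(v,S)=1$) and $\tfrac{1}{k}\E_S[|S|]=1$ on the right, which forces every individual inequality to be an equality; an analogous calculation yields the same for constraint~(3). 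These forced equalities, combined with the exchange structure of the matroid, should allow a joint accounting of the addition and removal contributions in each iteration and recover the $(1-\tfrac{2}{k})$ recursion, which then closes the induction via the one-line computation above.
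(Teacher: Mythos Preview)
Your approach is exactly the paper's: decompose $\Pr_{S'\sim\mathcal{D}_i}[u\in S']$ into an addition term controlled by constraint~(3) and a retention term controlled by constraint~(4), obtain the recursion $\Pr[u\in S_i]\le \tfrac{1}{k}\Pr[u\notin S_{i-1}]+(1-\tfrac{1}{k})\Pr[u\in S_{i-1}]$, and solve it by induction from $\Pr[u\in S_0]=0$.

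The obstacle you anticipate, however, does not exist. When $u\in M_i\cap S$ and the chosen element is $v=u$, the update $S+u-g_{i,S}(u)$ is \emph{not} a no-op. The paper explicitly states (in the paragraph preceding the algorithm) that when $u\in S$ the expression $S+u-g_{i,S}(u)$ is read as $S-g_{i,S}(u)$, i.e.\ $(S\cup\{u\})\setminus\{g_{i,S}(u)\}$; since $g_{i,S}(u)=u$ here, the result is $S\setminus\{u\}$, so $u$ is removed. In fact your own characterization already says this: you wrote that $u$ is retained iff $u\in S$ and $g_{i,S}(v)\neq u$, and in this case $g_{i,S}(v)=g_{i,S}(u)=u$, so the retention condition fails. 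This removal is precisely the event counted by constraint~(4), because $g_{i,S}^{-1}(u)=u$, so the term $x(g_{i,S}^{-1}(u),S)\cdot\mathbf{1}[u\in S]$ already captures it. Consequently the direct combination of the two constraint inequalities yields the $(1-\tfrac{2}{k})$ coefficient with no case loss, and your forced-equality argument---while a true observation about the LP---is unnecessary. The paper's proof uses constraints~(3) and~(4) only as inequalities.
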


\begin{proof}
	For a fixed $u\in N\setminus D$,
	\begin{align*}
		\Pr_{S\sim\mathcal{D}_i}[u\in S] =\sum_{S\in\supp(\mathcal{D}_{i-1}):u\notin S} \Pr_{\mathcal{D}_{i-1}}[S] \cdot x(u, S) +\sum_{S\in\supp(\mathcal{D}_{i-1}):u\in S} \Pr_{\mathcal{D}_{i-1}}[S] \cdot \sum_{u'\in M_i-g_{i,S}^{-1}(u)}x(u',S).
	\end{align*}
	
	The first part equals to zero if $u\notin M_i$, and if $u\in M_i$, it holds that
	\begin{align*}
		\sum_{S\in\supp(\mathcal{D}_{i-1}):u\notin S} \Pr_{\mathcal{D}_{i-1}}[S] \cdot x(u, S) &=\sum_{S\in\supp(\mathcal{D}_{i-1})} \Pr_{\mathcal{D}_{i-1}}[S] \cdot x(u, S)\cdot\mathbf{1}[u\notin S] \\
		&=\E_{S\sim\mathcal{D}_{i-1}}[x(u,S)\cdot\mathbf{1}[u\notin S]] \\
		&\leq \frac{1}{k}\Pr_{S\sim\mathcal{D}_{i-1}}[u\notin S].
	\end{align*}
	The last inequality holds since $x$ is a feasible solution of \eqref{lp}.
	
	The second part satisfies
	\begin{align*}
		&\sum_{S\in\supp(\mathcal{D}_{i-1}):u\in S} \Pr_{\mathcal{D}_{i-1}}[S] \cdot \sum_{u'\in M_i-g_{i,S}^{-1}(u)}x(u',S) \\
		&=\sum_{S\in\supp(\mathcal{D}_{i-1}):u\in S} \Pr_{\mathcal{D}_{i-1}}[S] \cdot \left(1-x(g_{i,S}^{-1}(u),S)\right) \\
		&=\Pr_{S\sim\mathcal{D}_{i-1}}[u\in S]-\sum_{S\in\supp(\mathcal{D}_{i-1})}\Pr_{\mathcal{D}_{i-1}}[S]\cdot x(g_{i,S}^{-1}(u),S)\cdot\mathbf{1}[u\in S] \\
		&=\Pr_{S\sim\mathcal{D}_{i-1}}[u\in S]-\E_{S\sim\mathcal{D}_{i-1}}[x(g_{i,S}^{-1}(u),S)\cdot\mathbf{1}[u\in S]] \\
		&\leq (1-1/k)\Pr_{S\sim\mathcal{D}_{i-1}}[u\in S].
	\end{align*}
	The last inequality holds since $x$ is a feasible solution of \eqref{lp}.
	
	Let $p_{i,u}=\Pr_{S\sim\mathcal{D}_i}[u\in S]$.
	By the above argument,
	\[ p_{i,u}\leq (1-p_{i-1,u})/k+p_{i-1,u}(1-1/k)=p_{i-1,u}(1-2/k)+1/k. \]
	For $u\in N\setminus D$, $p_{0,u}=0$.
	It is easy to show by induction that
	\[ p_{i,u}\leq 0.5\cdot (1-(1-2/k)^i). \]
	The lemma follows immediately.
\end{proof}

Let $O$ be the optimal solution.
We have

\begin{lemma}
	For every iteration $i=1,2,\ldots, k$ of Algorithm \ref{alg: deterministic matroid},
	\[ \E_{S\sim\mathcal{D}_i}[f(S)]\geq \left(1-\frac{2}{k}\right)\cdot\E_{S\sim\mathcal{D}_{i-1}}[f(S)]+\frac{(1+(1-2/k)^{i-1})}{2k}\cdot f(O). \]
\end{lemma}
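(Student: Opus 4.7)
The plan is to upper-bound the change $\E_{\mathcal{D}_i}[f(S)]-\E_{\mathcal{D}_{i-1}}[f(S)]$ by an expression whose two ingredients are controlled separately: the greedy gain $\sum_{u\in M_i}\E[f(u\mid S)]$ (to be pushed toward $f(O)$) and a loss term bounded by $\E[f(S)]$ (which produces the $(1-2/k)$ factor).

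First, I would expand the update rule of Algorithm \ref{alg: deterministic matroid} to obtain
\[
\E_{\mathcal{D}_i}[f(S)]-\E_{\mathcal{D}_{i-1}}[f(S)]=\E_{S\sim\mathcal{D}_{i-1}}\Bigl[\sum_{u\in M_i}x(u,S)\bigl(f(S+u-g_{i,S}(u))-f(S)\bigr)\Bigr].
\]
For each $u\in M_i$ with $u\notin S$, the identity $f(S+u-g_{i,S}(u))-f(S)=f(u\mid S-g_{i,S}(u))-f(g_{i,S}(u)\mid S-g_{i,S}(u))$ together with submodularity ($f(u\mid S-g_{i,S}(u))\geq f(u\mid S)$) gives
\[
f(S+u-g_{i,S}(u))-f(S)\ \geq\ f(u\mid S)-f(g_{i,S}(u)\mid S-g_{i,S}(u)),
\]
while the fixed points ($u\in M_i\cap S$, where $g_{i,S}(u)=u$ by Lemma \ref{lem: l1}) contribute zero to the left-hand side and can be absorbed since $f(u\mid S)=0$ whenever $u\in S$.

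Next I would plug in the first two LP constraints of \eqref{lp} in expectation. Constraint one yields $\sum_u\E[x(u,S)f(u\mid S)]\geq\tfrac{1}{k}\sum_u\E[f(u\mid S)]$, and constraint two yields $\sum_u\E[x(u,S)f(g_{i,S}(u)\mid S-g_{i,S}(u))]\leq\tfrac{1}{k}\sum_u\E[f(g_{i,S}(u)\mid S-g_{i,S}(u))]$. Because $g_{i,S}$ is a bijection $M_i\to S$, the last sum rewrites as $\sum_{v\in S}f(v\mid S-v)$, which is $\leq f(S)-f(\emptyset)\leq f(S)$ by the standard telescoping identity combined with submodularity and non-negativity. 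Putting these together,
\[
\E_{\mathcal{D}_i}[f(S)]\ \geq\ \Bigl(1-\tfrac{1}{k}\Bigr)\E_{\mathcal{D}_{i-1}}[f(S)]+\tfrac{1}{k}\sum_{u\in M_i}\E_{\mathcal{D}_{i-1}}[f(u\mid S)].
\]

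Finally I would push the marginal-sum toward $f(O)$. Augment $O$ with dummies so that it becomes a base; since $f$ ignores dummies, $f(O)$ is unchanged and $\sum_{u\in M_i}\E[f(u\mid S)]\geq\sum_{u\in O}\E[f(u\mid S)]$ by the greedy choice of $M_i$. Using $f(u\mid S)=0$ for $u\in S$ and submodularity, $\sum_{u\in O}f(u\mid S)\geq f(O\cup S)-f(S)$. Lemma \ref{lem: l2} then gives $\E[f(O\cup S)]\geq f(O)\cdot\min_{u\in O}\Pr[u\notin S]$, and since the non-dummy elements of $O$ lie in $N\setminus D$, Lemma \ref{lem: prob of u not in S} (with the trivial $\Pr[u\notin S]=1$ for dummies $u\notin S_0$) yields $\min_{u\in O}\Pr[u\notin S]\geq\tfrac{1}{2}(1+(1-2/k)^{i-1})$. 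Substituting this into the displayed inequality above produces exactly the claimed recursion. The main delicate point I expect is the careful bookkeeping around $u\in M_i\cap S$: the raw submodularity inequality can fail there, but it must be reconciled with the LP constraints by invoking $f(u\mid S)=0$ on that set and noting that the fixed points contribute $0$ to the update, so the final recursion is unaffected.
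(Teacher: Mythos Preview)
Your argument is correct and essentially identical to the paper's: decompose $f(S+u-g_{i,S}(u))-f(S)$ via submodularity, invoke the first two constraints of \eqref{lp}, bound $\sum_{v\in S}f(v\mid S-v)\le f(S)$, and finish with Lemmas \ref{lem: l2} and \ref{lem: prob of u not in S}. Two tiny fixes: for fixed points $u\in M_i\cap S$ the update is $S\mapsto S-u$, so the left side equals $-f(u\mid S-u)$ rather than zero (your displayed inequality still holds there, with equality, so nothing breaks), and Lemma \ref{lem: l2} requires the minimum over the whole ground set rather than over $O$ --- the passage to non-dummy elements via $f'(S)=f(S\setminus D)$ is what lets Lemma \ref{lem: prob of u not in S} control that minimum.
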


\begin{proof}
	On the one hand,
	\begin{align*}
		\sum_{u\in M_i} \E_{S\sim\mathcal{D}_{i-1}}[x(u,S)\cdot f(u\mid S)]
		&\geq \frac{1}{k}\sum_{u\in M_i} \E_{S\sim\mathcal{D}_{i-1}}[f(u\mid S)] \\
		&\geq \frac{1}{k}\sum_{u\in O} \E_{S\sim\mathcal{D}_{i-1}}[f(u\mid S)] \\
		&\geq \frac{1}{k} \E_{S\sim\mathcal{D}_{i-1}}[f(O\cup S)-f(S)] \\
		&\geq \frac{1}{k} \E_{S\sim\mathcal{D}_{i-1}}[0.5\cdot(1+(1-2/k)^{i-1})\cdot f(O)-f(S)].
	\end{align*}
	The first inequality holds since $x$ is a feasible solution of \eqref{lp}.
	The second is due to the choice of $M_i$.
	The third is due to submodularity.
	The last follows from Lemmas \ref{lem: l2} and \ref{lem: prob of u not in S}.
	
	On the other hand,
	\begin{align*}
		&\sum_{u\in M_i} \E_{S\sim\mathcal{D}_{i-1}}[x(u,S)\cdot f(g_{i,S}(u)\mid S\setminus\{g_{i,S}(u)\})] \\
		&\leq \frac{1}{k} \sum_{u\in M_i} \E_{S\sim\mathcal{D}_{i-1}}[f(g_{i,S}(u)\mid S\setminus\{g_{i,S}(u)\})] \\
		&= \frac{1}{k} \E_{S\sim\mathcal{D}_{i-1}}\left[\sum_{u\in M_i}(f(S)-f(S\setminus\{g_{i,S}(u)\}))\right] \\
		&\leq \frac{1}{k} \E_{S\sim\mathcal{D}_{i-1}}[f(S)].
	\end{align*}
	The first inequality holds since $x$ is a feasible solution of \eqref{lp}.
	The last inequality is due to submodularity.
	
	Finally, by combining the above inequalities,
	\begin{align*}
		&\E_{S\sim\mathcal{D}_i}[f(S)] \\
		&=\sum_{u\in M_i} \E_{S\sim\mathcal{D}_{i-1}}[x(u,S)\cdot f(S+u-g_{i,S}(u))] \\
		&\geq\sum_{u\in M_i} \E_{S\sim\mathcal{D}_{i-1}}[x(u,S)\cdot (f(S+u)+f(S-g_{i,S}(u))-f(S))] \\
		&= \sum_{u\in M_i} \E_{S\sim\mathcal{D}_{i-1}}[x(u,S)\cdot(f(S)+f(u\mid S)-f(g_{i,S}(u)\mid S\setminus\{g_{i,S}(u)\}))] \\
		&=\E_{S\sim\mathcal{D}_{i-1}}[f(S)]+\sum_{u\in M_i} \E_{S\sim\mathcal{D}_{i-1}}[x(u,S)\cdot f(u\mid S)]-\sum_{u\in M_i} \E_{S\sim\mathcal{D}_{i-1}}[x(u,S)\cdot f(g_{i,S}(u)\mid S\setminus\{g_{i,S}(u)\})] \\
		&\geq \E_{S\sim\mathcal{D}_{i-1}}[f(S)]+\frac{1}{k} \E_{S\sim\mathcal{D}_{i-1}}[0.5\cdot(1+(1-2/k)^{i-1})\cdot f(O)-f(S)]-\frac{1}{k} \E_{S\sim\mathcal{D}_{i-1}}[f(S)] \\
		&=\left(1-\frac{2}{k}\right)\cdot\E_{S\sim\mathcal{D}_{i-1}}[f(S)]+\frac{(1+(1-2/k)^{i-1})}{2k}\cdot f(O).
	\end{align*}
\end{proof}

The following theorem provides a theoretical guarantee for Algorithm \ref{alg: deterministic matroid}.

\begin{theorem}
	\label{thm: matroid}
	Algorithm \ref{alg: deterministic matroid} achieves a $(1+e^{-2})/4-\mathcal{O}(1/k^2)$ approximation ratio using $\mathcal{O}(n^2k^2)$ queries.
\end{theorem}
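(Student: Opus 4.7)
The plan is to solve the recurrence established in the preceding lemma in closed form, analyze the resulting expression as $k\to\infty$, and then account for the queries per outer iteration. Writing $\rho := 1-2/k$ and unrolling
$$\E_{S\sim\mathcal{D}_i}[f(S)] \;\geq\; \rho\cdot\E_{S\sim\mathcal{D}_{i-1}}[f(S)] \;+\; \frac{1+\rho^{i-1}}{2k}\cdot f(O)$$
from $\E_{S\sim\mathcal{D}_0}[f(S)]\geq 0$ (by non-negativity of $f$), the two resulting geometric sums collapse using $1-\rho = 2/k$ to yield
$$\E_{S\sim\mathcal{D}_k}[f(S)] \;\geq\; \sum_{j=1}^{k}\rho^{k-j}\cdot\frac{1+\rho^{j-1}}{2k}\cdot f(O) \;=\; \left(\frac{1-\rho^k}{4}+\frac{\rho^{k-1}}{2}\right)\cdot f(O).$$

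Next I bound the coefficient asymptotically. Expanding $\ln(1-2/k)$ to second order gives $k\ln(1-2/k) = -2 - 2/k + O(1/k^2)$ and $(k-1)\ln(1-2/k) = -2 + O(1/k^2)$, whence $\rho^k = e^{-2}(1-2/k)+O(1/k^2)$ and $\rho^{k-1} = e^{-2}+O(1/k^2)$. Substituting and combining,
$$\frac{1-\rho^k}{4}+\frac{\rho^{k-1}}{2} \;\geq\; \frac{1+e^{-2}}{4} + \frac{e^{-2}}{2k} - O(1/k^2) \;\geq\; \frac{1+e^{-2}}{4} - O(1/k^2).$$
Since Algorithm \ref{alg: deterministic matroid} returns $\argmax_{S\in\supp(\mathcal{D}_k)}f(S)\geq\E_{S\sim\mathcal{D}_k}[f(S)]$, the output achieves the claimed approximation ratio (numerically $(1+e^{-2})/4\approx 0.2838$).

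It remains to bound the query complexity. All non-query work per iteration (finding a maximum-weight base $M_i$ by matroid greedy on precomputed weights, building the mappings $g_{i,S}$ via bipartite matching, selecting an extreme point of the LP, and the final argmax) is polynomial and consults $f$ only through its precomputed coefficients. In iteration $i$, assembling the weights $\E_{S\sim\mathcal{D}_{i-1}}[f(u\mid S)]$ for every $u\in N$ requires $f(S)$ and $f(S+u)$ for each $S\in\supp(\mathcal{D}_{i-1})$; with $|\supp(\mathcal{D}_{i-1})|\leq|\mathcal{D}_{i-1}|=O(ni)$ from the earlier size lemma, this costs $O(n^2 i)$ queries, dominating the $O(nki)$ needed for the second group of LP coefficients $f(g_{i,S}(u)\mid S\setminus\{g_{i,S}(u)\})$. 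Summing over $i=1,\ldots,k$ gives $O(n^2 k^2)$ overall. The main subtlety lies in the asymptotic step: one must push $\rho^k$ and $\rho^{k-1}$ to order $1/k^2$ to confirm that the surviving linear correction is the positive $e^{-2}/(2k)$ term, giving the advertised $-O(1/k^2)$ loss rather than a cruder $-O(1/k)$ one.
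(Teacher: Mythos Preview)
Your proof is correct and follows essentially the same route as the paper. The paper verifies the closed form
\[
\E_{S\sim\mathcal{D}_i}[f(S)]\geq \frac{1}{4}\left[1+\left(\frac{2(i+1)}{k}-1\right)\rho^{\,i-1}\right]\cdot f(O)
\]
by induction and then bounds the final expression via the explicit inequality $(1-2/k)^k\geq e^{-2}(1-4/k)$, whereas you unroll the recurrence into two geometric sums to reach the equivalent expression $\tfrac{1-\rho^k}{4}+\tfrac{\rho^{k-1}}{2}$ and then use a second-order Taylor expansion of $\ln(1-2/k)$; the two closed forms coincide (since $\rho^k=\rho^{k-1}-\tfrac{2}{k}\rho^{k-1}$), and both asymptotic arguments give the $O(1/k^2)$ error. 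The query-complexity accounting is identical to the paper's.
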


\begin{proof}
	We prove by induction that
	\[ \E[f(S_i)]\geq \frac{1}{4}\left[1+\left(\frac{2(i+1)}{k}-1\right)\left(1-\frac{2}{k}\right)^{i-1}\right]\cdot f(O). \]
	Clearly, the claim holds for $i=0$.
	Assume that the claim holds for every $0\leq i'<i$.
	By the above lemma,
	\begin{align*}
		\E_{S\sim\mathcal{D}_i}[f(S)] &\geq \left(1-\frac{2}{k}\right)\cdot\E_{S\sim\mathcal{D}_{i-1}}[f(S)]+\frac{0.5(1+(1-2/k)^{i-1})}{k}\cdot f(O) \\
		&\geq \left(1-\frac{2}{k}\right)\cdot\frac{1}{4}\left[1+\left(\frac{2i}{k}-1\right)\left(1-\frac{2}{k}\right)^{i-2}\right]\cdot f(O)+\frac{1+(1-2/k)^{i-1}}{2k}\cdot f(O) \\
		&=\frac{1}{4}\left[1+\left(\frac{2(i+1)}{k}-1\right)\left(1-\frac{2}{k}\right)^{i-1}\right]\cdot f(O).
	\end{align*}
	Therefore,
	\begin{align*}
		\E_{S\sim\mathcal{D}_{k}}[f(S)] &\geq \frac{1}{4}\left[1+\left(\frac{2(k+1)}{k}-1\right)\left(1-\frac{2}{k}\right)^{k-1}\right]\cdot f(O) \\
		&\geq \frac{1+\frac{e^{-2}(1+2/k)(1-4/k)}{1-2/k}}{4}\cdot f(O) \\
		&= \left(\frac{1+e^{-2}}{4}-\mathcal{O}(1/k^2)\right)\cdot f(O).
	\end{align*}
	Finally, since $|\mathcal{D}_i|=\mathcal{O}(in)$, Algorithm \ref{alg: deterministic matroid} makes $\mathcal{O}(n^2i)$ queries at iteration $i$.
	Therefore, the total number of queries made during all the iterations is $\mathcal{O}(n^2k^2)$.
\end{proof}

\section{Deterministic Approximation for Knapsack Constraint}
\label{sec: knapsack}

In this section, we present deterministic approximation algorithms for submodular maximization under a knapsack constraint.
In Section \ref{sec: twin greedy}, we present the Twin Greedy algorithm, which is originally designed for the matroid constraint in \cite{HCCW20} and for a mechanism design version of submodular maximization in \cite{AmanatidisKS19}.
It returns a set with a $1/4$ approximation ratio and uses $\mathcal{O}(n^2)$ queries.
In Section \ref{sec: threshold twin greedy}, we combine the threshold technique with the Twin Greedy algorithm and obtain the so-called Threshold Twin Greedy algorithm, which returns a set with a $1/4-\epsilon$ approximation ratio and uses $\Tilde{\mathcal{O}}(n/\epsilon)$ queries.
However, the sets returned by these two algorithms may be infeasible.
Therefore, in Section \ref{sec: feasible solutions}, we introduce the enumeration technique to turn them into feasible solutions with the same approximation ratio.
In Section \ref{sec: tight example}, we present a tight example that shows that the Twin Greedy algorithm can not achieve an approximation ratio better than $1/4$ even under the cardinality constraint.
Hence, our analysis for the algorithm, as well as the analysis in \cite{HCCW20}, is tight.

\subsection{The Twin Greedy Algorithm}
\label{sec: twin greedy}

In this section, we present a deterministic $1/4$ approximation algorithm for the knapsack constraint.
The formal procedure is presented as Algorithm \ref{alg: Twin Greedy Knapsack}.
It maintains two \emph{disjoint} candidate solutions $S_1$ and $S_2$ throughout its execution. At each round, when there remain unpacked elements and at least one feasible candidate solution, it determines a pair $(k,u)$ such that $u$ has the largest density with respect to $S_k$ ($k\in\{1,2\}$).
Then, $u$ is added into $S_k$ when its marginal value to $S_k$ is positive, otherwise, the algorithm will terminate immediately.
Note that the algorithm may return an infeasible set, since each candidate may pack one more element that violates the knapsack constraint.
We will handle this issue in Section \ref{sec: feasible solutions} by a standard enumeration technique.

\begin{algorithm}[ht]
	\caption{Twin Greedy for Knapsack}
	\begin{algorithmic}[1]
		\State \textbf{Input} $N,f,c,B$.
		\State $S_1 \gets \emptyset, S_2 \gets \emptyset, J \gets \{1,2\}$.
		\Comment{$k\in J$ iff $c(S_k)< B$}
		\While{$N\neq\emptyset$ and $J\neq \emptyset$}
		\State $(k,u)\gets \argmax_{k'\in J, u'\in N} \frac{f(u'\mid S_{k'})}{c(u')}$.
		\If{$f(u\mid S_k)\leq 0$}\label{line: if}
		\State \textbf{break}
		\EndIf
		\State $S_k\gets S_k\cup\{u\}$.
		\If{$c(S_k) \geq  B$}
		\State $J\gets J\setminus\{k\}$.
		\EndIf
		\State $N\gets N\setminus\{u\}$.
		\EndWhile
		\State\Return $\argmax\{f(S_1),f(S_2)\}$.
	\end{algorithmic}
	\label{alg: Twin Greedy Knapsack}
\end{algorithm}

We begin to analyze Algorithm \ref{alg: Twin Greedy Knapsack} by first defining some notations.
Assume that Algorithm \ref{alg: Twin Greedy Knapsack} ran for $t$ rounds in total.
In each round, an element was selected.
Thus, there are $t$ elements selected.
Some of them were added to $S_1$, and the others were added to $S_2$.
For $i=1,2,\ldots, t$, let $u_i$ be the element selected by Algorithm \ref{alg: Twin Greedy Knapsack} at round $i$.
For $k=1,2$ and $i=1,2,\ldots, t$, let $S_k^i$ be the $k$-th candidate solution at the end of round $i$ and $S_k$ denote the $k$-th candidate solution at the end of round $t$.
Thus, $S_k^i=S_k \cap \{u_1,u_2,\ldots,u_i\}$.
Let $S^*$ be the set returned by Algorithm \ref{alg: Twin Greedy Knapsack} and $O$ be the optimal solution.

We next introduce two useful properties (Lemmas \ref{lem: aux-1} and \ref{lem: aux-2}) that Algorithm \ref{alg: Twin Greedy Knapsack} can guarantee.

\begin{lemma}
	\label{lem: aux-1}
	For any $k\in\{1,2\}$, we have
	\begin{itemize}
		\item If $c(S_k)<B$, then $f(O\setminus(S_1\cup S_2)\mid S_k)\leq 0$.
		\item If $c(S_k)\geq B$, let $m$ be the smallest index such that $S_k^m=S_k$, then $f(O\setminus(S_1^m\cup S_2^m)\mid S_k)\leq \sum_{j: u_j\in S_k \setminus O} f(u_j\mid S_k^{j-1})$.
	\end{itemize}
\end{lemma}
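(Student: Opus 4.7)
The proof splits along the two cases of the lemma. For part~(i), since $c(S_k) < B$ the index $k$ remains in $J$, so the \textbf{while} loop must have terminated either because $N$ became empty or because of the \textbf{break} at line~\ref{line: if}. In the former case every element of the original ground set lies in $S_1 \cup S_2$, so $O \setminus (S_1 \cup S_2) = \emptyset$ and the marginal is zero. In the latter case the pair $(\hat{k}, \hat{u})$ of maximum density at the break satisfies $f(\hat{u} \mid S_{\hat{k}}) \leq 0$, and since any $u \in O \setminus (S_1 \cup S_2)$ still lies in $N$, the pair $(k, u)$ was a candidate, so the density-greedy rule forces $f(u \mid S_k) \leq 0$. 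A standard submodular telescoping $f(O \setminus (S_1 \cup S_2) \mid S_k) \leq \sum_{u \in O \setminus (S_1 \cup S_2)} f(u \mid S_k)$ then closes part~(i).

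For part~(ii), set $O' = O \setminus (S_1^m \cup S_2^m)$ and note that $S_k^m = S_k$ forces $O' \subseteq O \setminus S_k$. The argument rests on two ingredients. First, a cost bound obtained from $c(O) \leq B \leq c(S_k)$:
\[ c(O') \leq c(O) - c(O \cap S_k) \leq c(S_k) - c(O \cap S_k) = c(S_k \setminus O). \]
Second, a density bound: for every $u \in O'$ and every round $j \leq m$ with $u_j \in S_k$, the element $u$ still lies in $N$ at round $j$ (because $u \notin S_1^{j-1} \cup S_2^{j-1} \subseteq S_1^m \cup S_2^m$) and $k \in J$ at round $j$ (since $u_j$ is added to $S_k$), so the greedy rule together with submodularity gives
\[ \frac{f(u_j \mid S_k^{j-1})}{c(u_j)} \geq \frac{f(u \mid S_k^{j-1})}{c(u)} \geq \frac{f(u \mid S_k)}{c(u)}. \]

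To combine them I plan to invoke a standard fractional transportation argument. Since $c(O') \leq c(S_k \setminus O)$ and $S_k^m = S_k$ ensures that every $u_j \in S_k \setminus O$ appears at some round $j \leq m$, there exist nonnegative weights $y_{u,j}$ supported on pairs with $u \in O'$ and $u_j \in S_k \setminus O$ at round $j \leq m$, satisfying $\sum_j y_{u,j} = c(u)$ for each $u$ and $\sum_u y_{u,j} \leq c(u_j)$ for each $j$. Multiplying the density bound by $y_{u,j}/c(u)$ and double-summing yields $\sum_{u \in O'} f(u \mid S_k) \leq \sum_{j: u_j \in S_k \setminus O} f(u_j \mid S_k^{j-1})$, and a final submodular telescoping upgrades the left-hand side to $f(O' \mid S_k)$, completing the proof.

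The main delicacy is in setting up the transportation problem so that the density inequality applies only to rounds $j \leq m$ while the right-hand sum only collects rounds where $u_j \in S_k \setminus O$. Once the cost bound is restricted to this same set of rounds (which works precisely because $S_k^m = S_k$), the transportation plan exists by the usual supply/demand argument and the rest of the computation is routine.
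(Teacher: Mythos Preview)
Your proposal is correct and follows the same skeleton as the paper: part~(i) is handled identically, and part~(ii) rests on the same two ingredients, namely the density comparison $\frac{f(u_j\mid S_k^{j-1})}{c(u_j)}\ge \frac{f(u\mid S_k^{j-1})}{c(u)}$ for $u\in O'$ and $u_j\in S_k\setminus O$, together with the cost bound $c(O')\le c(S_k\setminus O)$.

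The only difference is in how the two ingredients are combined. You invoke a fractional transportation plan $\{y_{u,j}\}$ to aggregate the density inequalities, whereas the paper takes the more direct route: fix $u_j$, sum the inequality $c(u)\,f(u_j\mid S_k^{j-1})\ge c(u_j)\,f(u\mid S_k^{j-1})$ over all $u\in O'$ (using submodularity to pass from $S_k^{j-1}$ to $S_k$ and then to $f(O'\mid S_k)$), and then sum over $u_j\in S_k\setminus O$ to obtain
\[
c(O')\sum_{j:u_j\in S_k\setminus O} f(u_j\mid S_k^{j-1}) \;\ge\; c(S_k\setminus O)\cdot f(O'\mid S_k),
\]
after which the cost bound finishes. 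Your transportation argument is a valid (and more general) packaging of the same computation, but it is heavier than necessary here; the paper's plain double sum avoids introducing the auxiliary weights and the supply/demand feasibility check altogether. Both routes implicitly use that every added marginal $f(u_j\mid S_k^{j-1})$ is positive (guaranteed by the \textbf{break} condition), which is what lets you replace $\sum_u y_{u,j}\le c(u_j)$ by equality on the right-hand side, and similarly what lets the paper divide through by $c(O')$ without worrying about the sign of $f(O'\mid S_k)$.
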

\begin{proof}
	\textbf{Case} $c(S_k)<B$.
	If $O\setminus(S_1\cup S_2)=\emptyset$, then the claim holds trivially.
	If $O\setminus(S_1\cup S_2)\neq \emptyset$, since $c(S_k)<B$ and elements in $O\setminus(S_1\cup S_2)$ were not added into $S_1$ or $S_2$, by line \ref{line: if} of Algorithm \ref{alg: Twin Greedy Knapsack}, we know that $f(u\mid S_k)\leq 0$ for any $u\in O\setminus(S_1\cup S_2)$.
	By submodularity, $f(O\setminus(S_1\cup S_2)\mid S_k)\leq 0$.
	
	\textbf{Case} $c(S_k)\geq B$.
	For any $u_j\in S_k\setminus O$ and $u\in O\setminus(S_1^m\cup S_2^m)$, since $u_j$ is the element added into $S_k$ at round $j$ and $u$ was available but not selected at that time, we have
	\begin{align*}
		\frac{f(u_j\mid S_k^{j-1})}{c(u_j)}\geq \frac{f(u\mid S_k^{j-1})}{c(u)}.
	\end{align*}
	This is equivalent to
	\begin{align*}
		c(u)\cdot f(u_j\mid S_k^{j-1}) \geq c(u_j)\cdot f(u\mid S_k^{j-1}).
	\end{align*}
	Fixing $u_j$ and summing over all $u\in O\setminus(S_1^m\cup S_2^m)$, we have
	\begin{align*}
		c(O\setminus(S_1^m\cup S_2^m))\cdot f(u_j\mid S_k^{j-1}) &\geq c(u_j)\sum_{u\in O\setminus(S_1^m\cup S_2^m)} f(u\mid S_k^{j-1}) \\
		&\geq c(u_j)\sum_{u\in O\setminus(S_1^m\cup S_2^m)} f(u\mid S_k) \\
		&\geq c(u_j)\cdot f(O\setminus(S_1^m\cup S_2^m)\mid S_k).
	\end{align*}
	The last two inequalities are due to submodularity.
	Next, summing over all $u_j\in S_k\setminus O$, we have
	\begin{align*}
		\sum_{j: u_j\in S_k \setminus O} f(u_j\mid S_k^{j-1}) \geq \frac{c(S_k\setminus O)}{c(O\setminus(S_1^m\cup S_2^m))} \cdot f(O\setminus(S_1^m\cup S_2^m)\mid S_k).
	\end{align*}
	Since $c(S_k) \geq B \geq c(O)$, we have $c(S_k\setminus O) \geq c(O\setminus S_k) = c(O\setminus S_k^m) \geq c(O\setminus(S_1^m\cup S_2^m))$.
	Thus,
	\begin{align*}
		\sum_{j: u_j\in S_k \setminus O} f(u_j\mid S_k^{j-1}) \geq f(O\setminus(S_1^m\cup S_2^m)\mid S_k).
	\end{align*}
\end{proof}


\begin{lemma}
	\label{lem: aux-2}
	For any $k\in\{1,2\}$, let $\ell=3-k$ be the index of the other candidate solution.
	For fixed $i\in\{1,2,\ldots,t\}$, if $c(S_{k}^{i-1})<B$, then for any subset $T\subseteq S_{\ell}^i$, $f(T\mid S_{k}^i)\leq \sum_{j:u_j\in T} f(u_j\mid S_{\ell}^{j-1})$. 
\end{lemma}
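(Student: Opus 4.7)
The plan is to chain together the greedy density rule and submodularity. Fix $k\in\{1,2\}$, let $\ell=3-k$, fix round $i$ with $c(S_k^{i-1})<B$, and let $T\subseteq S_\ell^i$. Enumerate the elements of $T$ in the order they were inserted into $S_\ell$: write $T=\{u_{j_1},u_{j_2},\dots,u_{j_r}\}$ with $j_1<j_2<\dots<j_r\leq i$, and decompose
\[
f(T\mid S_k^i) \;=\; \sum_{p=1}^{r} f\bigl(u_{j_p}\mid S_k^i\cup\{u_{j_1},\dots,u_{j_{p-1}}\}\bigr).
\]
Since the two candidate solutions are kept disjoint throughout the execution, $u_{j_p}\notin S_k^i$, so this marginal decomposition is non-degenerate.

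Next, I will bound each summand. By submodularity, adding extra elements to the conditioning set only decreases the marginal value, so
\[
f\bigl(u_{j_p}\mid S_k^i\cup\{u_{j_1},\dots,u_{j_{p-1}}\}\bigr) \;\leq\; f(u_{j_p}\mid S_k^i) \;\leq\; f(u_{j_p}\mid S_k^{j_p-1}),
\]
where the last inequality uses $S_k^{j_p-1}\subseteq S_k^i$ (a consequence of $j_p\leq i$).

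The crucial step is to bring in the greedy choice. From the hypothesis $c(S_k^{i-1})<B$ and monotonicity of cost, $c(S_k^{j_p-1})\leq c(S_k^{i-1})<B$, so candidate $k$ was still in the index set $J$ at the beginning of round $j_p$. Since the algorithm selected the pair $(\ell,u_{j_p})$ at that round, the density rule yields
\[
\frac{f(u_{j_p}\mid S_\ell^{j_p-1})}{c(u_{j_p})} \;\geq\; \frac{f(u_{j_p}\mid S_k^{j_p-1})}{c(u_{j_p})},
\]
and therefore $f(u_{j_p}\mid S_k^{j_p-1})\leq f(u_{j_p}\mid S_\ell^{j_p-1})$. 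Combining with the previous inequality and summing over $p=1,\dots,r$ gives
\[
f(T\mid S_k^i) \;\leq\; \sum_{p=1}^{r} f(u_{j_p}\mid S_\ell^{j_p-1}) \;=\; \sum_{j:\,u_j\in T} f(u_j\mid S_\ell^{j-1}),
\]
as claimed.

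I expect no serious obstacle: the proof is essentially bookkeeping. The only subtle point is ensuring that $k\in J$ at every round $j_p\leq i$ so that the greedy rule applies on both $(k,u_{j_p})$ and $(\ell,u_{j_p})$; this is exactly what the hypothesis $c(S_k^{i-1})<B$ (rather than, say, $c(S_k^i)<B$) is designed to guarantee, and it is essential for the argument.
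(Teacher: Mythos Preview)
Your proof is correct and follows essentially the same route as the paper: use the greedy density rule (valid because $c(S_k^{j-1})\leq c(S_k^{i-1})<B$ keeps $k\in J$) to compare $f(u_j\mid S_k^{j-1})$ with $f(u_j\mid S_\ell^{j-1})$, then apply submodularity to pass from individual marginals to $f(T\mid S_k^i)$. The only cosmetic difference is that you write out an explicit telescoping decomposition of $f(T\mid S_k^i)$ before invoking submodularity, whereas the paper applies the standard inequality $f(T\mid S_k^i)\leq\sum_{u_j\in T}f(u_j\mid S_k^i)$ directly.
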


\begin{proof}
	For any $u_j\in T\subseteq S_{\ell}^i$, since $u_j$ was added into $S_{\ell}$ instead of $S_{k}$ and $c(S_{k}^{j-1})\leq c(S_{k}^{i-1})<B$, we have
	\begin{align*}
		\frac{f(u_j\mid S_{\ell}^{j-1})}{c(u_j)}\geq \frac{f(u_j\mid S_{k}^{j-1})}{c(u_j)}.
	\end{align*}
	Thus, $f(u_j\mid S_{\ell}^{j-1})\geq f(u_j\mid S_{k}^{j-1})$.
	Besides, $S_{k}^{j-1}\subseteq S_{k}^i$ since round $j-1$ is before round $i$.
	Then,
	\begin{align*}
		\sum_{j:u_j\in T} f(u_j\mid S_{\ell}^{j-1}) \geq \sum_{j:u_j\in T} f(u_j\mid S_{k}^{j-1}) \geq \sum_{j:u_j\in T} f(u_j\mid S_{k}^i) \geq f(T\mid S_{k}^i).
	\end{align*}
	The last two inequalities are due to submodularity.
\end{proof}

The following theorem provides a theoretical guarantee for Algorithm \ref{alg: Twin Greedy Knapsack}.

\begin{theorem}
	Algorithm \ref{alg: Twin Greedy Knapsack} achieves a $1/4$ approximation ratio (though the output may be infeasible) and uses $\mathcal{O}(n^2)$ queries.
	\label{thm: Twin Greedy Knapsack}
\end{theorem}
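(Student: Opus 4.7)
The plan is to prove the stronger claim $f(S_1) + f(S_2) \ge f(O)/2$; since the algorithm returns $\argmax\{f(S_1),f(S_2)\}$, this immediately yields the $1/4$ ratio. The starting reduction is the standard submodular estimate obtained by applying submodularity to the pair $(O \cup S_1, O \cup S_2)$---whose intersection is $O$, using $S_1 \cap S_2 = \emptyset$---and invoking non-negativity of $f$:
\[
f(O) \;\le\; f(O \cup S_1) + f(O \cup S_2) - f(O \cup S_1 \cup S_2) \;\le\; f(O \cup S_1) + f(O \cup S_2).
\]
It therefore suffices to prove $f(O \cup S_1) + f(O \cup S_2) \le 2\bigl(f(S_1) + f(S_2)\bigr)$.

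The core technical step is to establish, uniformly for each $k \in \{1,2\}$ with $\ell = 3-k$, the bound
\[
f(O \cup S_k) \;\le\; 2 f(S_k) + f(S_\ell) - \sum_{j:\, u_j \in S_k \cap O} f(u_j \mid S_k^{j-1}) - \sum_{j:\, u_j \in S_\ell \setminus O} f(u_j \mid S_\ell^{j-1}).
\]
Summing this over $k = 1, 2$ makes the negative terms telescope: because $\sum_{j:\, u_j \in S_k} f(u_j \mid S_k^{j-1}) = f(S_k)$ by the chain rule of marginals, the four subtracted sums combine into exactly $-f(S_1) - f(S_2)$, leaving $3\bigl(f(S_1)+f(S_2)\bigr) - \bigl(f(S_1)+f(S_2)\bigr) = 2\bigl(f(S_1)+f(S_2)\bigr)$, as required.

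To derive the uniform bound, I split on whether $c(S_k) < B$ (case~1) or $c(S_k) \ge B$ (case~2) and decompose $f(O \cup S_k) - f(S_k) = f(O \setminus S_k \mid S_k)$ accordingly. In case~1, $O \setminus S_k = (O \cap S_\ell) \cup (O \setminus (S_1 \cup S_2))$; submodularity splits the marginal into two pieces, Lemma~\ref{lem: aux-1} shows the second is non-positive, and Lemma~\ref{lem: aux-2} applied with $i = t$ and $T = O \cap S_\ell$ (legal since $c(S_k^{t-1}) \le c(S_k) < B$) handles the first. In case~2, letting $m$ be the round at which $S_k^m = S_k$ first holds, the decomposition becomes $O \setminus S_k = (O \cap S_\ell^m) \cup (O \setminus (S_1^m \cup S_2^m))$; Lemma~\ref{lem: aux-1} bounds the second piece by $\sum_{j:\, u_j \in S_k \setminus O} f(u_j \mid S_k^{j-1})$, and Lemma~\ref{lem: aux-2} applied with $i = m$ (legal since $c(S_k^{m-1}) < B$) bounds the first by $\sum_{j:\, u_j \in O \cap S_\ell^m} f(u_j \mid S_\ell^{j-1}) \le \sum_{j:\, u_j \in S_\ell \cap O} f(u_j \mid S_\ell^{j-1})$. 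To reach the unified form in case~1, I pad with the non-negative slack $\sum_{j:\, u_j \in S_k \setminus O} f(u_j \mid S_k^{j-1})$, each summand of which is positive because the break on line~\ref{line: if} forces every inserted element to have positive marginal. The query count is immediate: the outer loop runs at most $n$ times (one element leaves $N$ per round), and each iteration finds the $\argmax$ over at most $2n$ pairs $(k', u')$ via $\mathcal{O}(n)$ oracle calls when $f(S_k)$ is cached, giving $\mathcal{O}(n^2)$ total. The main obstacle is engineering the single uniform bound that holds in both cases: the trick of padding case~1 with non-negative slack is essential, since it matches the algebraic form of case~2 and makes the telescoping cancellation in the sum over $k$ go through cleanly.
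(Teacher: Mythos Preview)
Your proof is correct and rests on exactly the same ingredients as the paper's: the reduction to $f(O\cup S_1)+f(O\cup S_2)\le 2(f(S_1)+f(S_2))$ via submodularity and disjointness of $S_1,S_2$, followed by a case split on feasibility of $S_k$ handled through Lemmas~\ref{lem: aux-1} and~\ref{lem: aux-2}. The only organizational difference is that you prove a single uniform per-$k$ bound (padding case~1 with the non-negative slack $\sum_{j:\,u_j\in S_k\setminus O}f(u_j\mid S_k^{j-1})$ so that both cases take the same algebraic form) and then sum over $k$, whereas the paper does a global two-case split (both feasible vs.\ at least one infeasible, with a WLOG on which one first overflows) and combines all four marginal sums directly. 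Your formulation is slightly more symmetric and avoids the WLOG, but the substance and the use of the two auxiliary lemmas are identical.
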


\begin{proof}
	By our notations, we need to show that $f(S^*)\geq \frac{1}{4} f(O)$.
	Since $f(S^*)=\max\{f(S_1),f(S_2)\}\geq (f(S_1)+f(S_2))/2$, it suffices to show that $f(S_1)+f(S_2)\geq\frac{1}{2} f(O)$.
	We prove this by case analysis, according to whether the candidate solutions are feasible.
	
	\textbf{Case 1:} Both candidate solutions are feasible at the end of Algorithm \ref{alg: Twin Greedy Knapsack}, i.e., $c(S_1)<B$ and $c(S_2)<B$.
	
	In this case, by Lemma \ref{lem: aux-1}, we have
	\begin{align*}
		f(O\setminus S_2 \mid S_1) &=f(O\setminus (S_1\cup S_2)\mid S_1)\leq 0. \\
		f(O\setminus S_1 \mid S_2) &=f(O\setminus (S_1\cup S_2)\mid S_2)\leq 0.
	\end{align*}
	Besides, by plugging $T=O\cap S_{\ell}\subseteq S_{\ell}$ into Lemma \ref{lem: aux-2}, we have
	\begin{align*}
		f(O\cap S_2\mid S_1) &\leq \sum_{j:u_j\in O\cap S_2} f(u_j\mid S_{2}^{j-1}). \\
		f(O\cap S_1\mid S_2) &\leq \sum_{j:u_j\in O\cap S_1} f(u_j\mid S_{1}^{j-1}).
	\end{align*}
	Therefore, by combining the above inequalities,
	\begin{align*}
		f(S_1)+f(S_2)
		&\geq \sum_{j:u_j\in O\cap S_1} f(u_j\mid S_{1}^{j-1})+\sum_{j:u_j\in O\cap S_2} f(u_j\mid S_{2}^{j-1}) \\
		&\geq f(O\setminus S_1 \mid S_2)+f(O\setminus S_2 \mid S_1)+f(O\cap S_2\mid S_1)+f(O\cap S_1\mid S_2) \\
		&\geq f(O\mid S_1)+f(O\mid S_2) \\
		&= f(O\cup S_1)-f(S_1)+f(O\cup S_2)-f(S_2) \\
		&\geq f(O)-(f(S_1)+f(S_2)).
	\end{align*}
	The last two inequalities hold due to submodularity and the fact that $S_1\cap S_2=\emptyset$.
	Thus, $f(S_1)+f(S_2)\geq\frac{1}{2}f(O)$.
	
	\textbf{Case 2:} At least one candidate solution is not feasible at the end of Algorithm \ref{alg: Twin Greedy Knapsack}, i.e., $c(S_1)\geq B$ or $c(S_2)\geq B$.
	
	Assume that for some $m\in\{1,2,\ldots,t\}$, $S_1$ first became infeasible at the end of round $m$, that is, $c(S_1^{m-1})<B$ but $c(S_1^m)\geq B$, and $c(S_2^m)<B$. 
	By Lemma \ref{lem: aux-1},
	\begin{align*}
		f(O\setminus S_1 \mid  S_2) =f(O\setminus(S_1\cup S_2)\mid S_2) \leq \max\left\{0,\sum_{j:u_j\in S_2\setminus O} f(u_j\mid S_2^{j-1})\right\}\leq \sum_{j:u_j\in S_2\setminus O} f(u_j\mid S_2^{j-1}).
	\end{align*} 
	Again by Lemma \ref{lem: aux-1} and the fact that $S_1=S_1^m$,
	\begin{align*}
		f(O\setminus S_2^m \mid  S_1)=f(O\setminus(S_1^m\cup S_2^m)\mid S_1)\leq \sum_{j:u_j\in S_1\setminus O} f(u_j\mid S_1^{j-1}).
	\end{align*}
	Next, by submodularity and plugging $k=2,\ell=1, T=O\cap S_1\subseteq S_1^m$ into Lemma \ref{lem: aux-2},
	\begin{align*}
		f(O\cap S_1 \mid  S_2)\leq f(O\cap S_1 \mid  S_2^m)\leq \sum_{j:u_j\in O \cap S_1} f(u_j\mid S_1^{j-1}).
	\end{align*}
	By plugging $k=1,\ell=2, T=O\cap S_2^m\subseteq S_2^m$ into Lemma \ref{lem: aux-2},
	\begin{align*}
		f(O\cap S_2^m \mid  S_1) =f(O\cap S_2^m \mid  S_1^m) \leq \sum_{j:u_j\in O \cap S_2^m} f(u_j\mid S_2^{j-1})\leq \sum_{j:u_j\in O \cap S_2} f(u_j\mid S_2^{j-1}).
	\end{align*}
	Therefore, by combining the above inequalities,
	\begin{align*}
		&f(S_1)+f(S_2) \\
		&= \sum_{j:u_j\in S_1\setminus O} f(u_j\mid S_1^{j-1})+\sum_{j:u_j\in O\cap S_1} f(u_j\mid S_{1}^{j-1})+\sum_{j:u_j\in S_2\setminus O} f(u_j\mid S_2^{j-1})+\sum_{j:u_j\in O\cap S_2} f(u_j\mid S_{2}^{j-1}) \\
		&\geq f(O\setminus S_2^m \mid S_1)+f(O\cap S_1\mid S_2)+f(O\setminus S_1 \mid S_2)+f(O\cap S_2^m\mid S_1) \\
		&\geq f(O\mid S_1)+f(O\mid S_2) \\
		&= f(O\cup S_1)-f(S_1)+f(O\cup S_2)-f(S_2) \\
		&\geq f(O)-(f(S_1)+f(S_2)).
	\end{align*}
	The last two inequalities hold due to submodularity and the fact that $S_1\cap S_2=\emptyset$.
	Thus, $f(S_1)+f(S_2)\geq\frac{1}{2}f(O)$.
	
	Finally, Algorithm \ref{alg: Twin Greedy Knapsack} runs at most $n$ rounds and makes $\mathcal{O}(n)$ queries at each round.
	Thus, it makes $\mathcal{O}(n^2)$ queries in total.
\end{proof}



\subsection{The Threshold Twin Greedy Algorithm}
\label{sec: threshold twin greedy}

In this section, we accelerate the Twin Greedy algorithm by applying the threshold technique \cite{BadanidiyuruV14}.
The modified algorithm is called the Threshold Twin Greedy algorithm and is formulated as Algorithm \ref{alg: Threshold Twin Greedy Knapsack}.


Algorithm \ref{alg: Threshold Twin Greedy Knapsack} maintains a set of thresholds for each element and stores them decreasingly in a priority queue.
At each round, Algorithm \ref{alg: Threshold Twin Greedy Knapsack} only considers the first element in the queue.
It first removes the element from the queue and then compares its marginal density with its threshold.
If the marginal density is at least $(1-\epsilon)$ of the threshold, the element will be added to the current solution.
Otherwise, the threshold will be updated to the marginal density and the element will be reinserted into the queue as long as it has not been reinserted into the queue for $\Omega(\log n)$ times.

\begin{algorithm}[ht]
	\caption{Threshold Twin Greedy for Knapsack}
	\begin{algorithmic}[1]
		\State \textbf{Input} $N,f,c,B, \epsilon$.
		\State $S_1 \gets \emptyset, S_2 \gets \emptyset, J \gets \{1,2\}$.
		\Comment{$k\in J$ iff $c(S_k)< B$}
		\State $\Delta(u)\gets f(u)$ for $u\in N$.
		\State Maintain a priority queue $Q$ where elements in $N$ are sorted in decreasing order by their keys, and the key of $u\in N$ is initialized to $\Delta(u)/c(u)$.
		\State $q(u)\gets 0$ for $u\in N$.
		\Comment{$q(u)$ records the number of times $u$ is reinserted into $Q$}
		\State $D\gets \emptyset$.
		\Comment{$D$ records the elements that has been reinserted into $Q$ $\frac{2\ln(n/\epsilon)}{\epsilon}$ times}
		\While{$N\neq\emptyset$, $J\neq \emptyset$ and $Q\neq \emptyset$}
		\State Remove the element $u$ from $Q$ with the maximum key.
		\State $k\gets \argmax_{k'\in J} f(u\mid S_{k'})$.
		\If{$f(u\mid S_k)\leq 0$} \label{line: if-threshold}
		\State \textbf{break}
		\EndIf
		\If{$f(u\mid S_k)\geq (1-\epsilon)\cdot\Delta(u)$}
		\State $S_k\gets S_k\cup\{u\}$.
		\State $N\gets N\setminus\{u\}$.
		\If{$c(S_k) \geq  B$}
		\State $J\gets J\setminus\{k\}$.
		\EndIf
		\Else
		\If{$q(u)\leq \frac{2\ln(n/\epsilon)}{\epsilon}$}
		\State $\Delta(u)\gets f(u \mid S_k)$.
		\State Reinsert $u$ into $Q$ with key $\Delta(u)/c(u)$.
		\State $q(u)\gets q(u)+1$.
		\Else
		\State $D\gets D\cup\{u\}$.
		\EndIf
		\EndIf
		\EndWhile
		\State\Return $\argmax\{f(S_1),f(S_2)\}$.
	\end{algorithmic}
	\label{alg: Threshold Twin Greedy Knapsack}
\end{algorithm}

In this way, if the element was added into the solution, it means that Algorithm \ref{alg: Threshold Twin Greedy Knapsack} selects an element with the largest density at this round, up to a $(1-\epsilon)$ factor.
Otherwise, the threshold of this element will decrease by a factor of at least $1-\epsilon$.
In other words, the threshold decreases exponentially and it takes $\mathcal{O}(\log n)$ queries to determine whether Algorithm \ref{alg: Threshold Twin Greedy Knapsack} should select an element.

We begin to analyze Algorithm \ref{alg: Threshold Twin Greedy Knapsack} by first defining some notations.
Assume that Algorithm \ref{alg: Threshold Twin Greedy Knapsack} added $t$ elements into $S_1$ and $S_2$ in total, denoted by $\{u_1,u_2,\ldots,u_t\}$.
Some of them were added to $S_1$, and the others were added to $S_2$.
For $i=1,2,\ldots,t-1$, $u_i$ was added right before $u_{i+1}$.
But $u_i$ and $u_{i+1}$ are not necessarily added in two successive rounds.
For $i=1,2,\ldots, t$, let $\Delta_i(u)$ be the value of $\Delta(u)$ at the beginning of the round where $u_i$ is added.
For $k=1,2$ and $i=1,2,\ldots, t$, let $S_k^i$ be the $k$-th candidate solution after $u_i$ is added and $S_k$ denote the $k$-th candidate solution after $u_t$ is added.
Thus, $S_k^i=S_k \cap \{u_1,u_2,\ldots,u_i\}$.
Let $S^*$ be the set returned by Algorithm \ref{alg: Twin Greedy Knapsack}, $O$ be the optimal solution, and $O'=O\setminus D$, where $D$ is defined in Algorithm \ref{alg: Threshold Twin Greedy Knapsack}.

We first show that the marginal value of $O\setminus O'$ with respect to $S_k$ is small.
\begin{lemma}
	\label{lem: D is small}
	For any $k\in\{1,2\}$, we have $f(O\setminus O'\mid S_k)\leq \epsilon f(O)$.
\end{lemma}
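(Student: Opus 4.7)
The plan is to show that every element of $D$ contributes only a negligible marginal gain over the final $S_k$, and then sum these contributions over $O\setminus O' = O\cap D$. Since the standard submodular subadditivity of marginals gives $f(O\setminus O'\mid S_k)\le \sum_{u\in O\cap D} f(u\mid S_k)$, it suffices to prove a pointwise bound $f(u\mid S_k^{\mathrm{final}})\le (\epsilon/n)^2 f(u)$ for each $u\in O\cap D$ and each $k$, and then observe that $f(u)\le f(O)$ by optimality (since any singleton $\{u\}\subseteq O$ is feasible, as $c(u)\le c(O)\le B$), which yields $(\epsilon/n)^2\cdot n\cdot f(O) = (\epsilon^2/n)f(O)\le\epsilon f(O)$.

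First I would control the final value of $\Delta(u)$ for $u\in D$. By construction, $u$ is placed in $D$ only after being reinserted into $Q$ more than $\frac{2\ln(n/\epsilon)}{\epsilon}$ times; every reinsertion is triggered precisely because $f(u\mid S_{k^*})<(1-\epsilon)\Delta(u)$ for the current $\argmax$ index $k^*\in J$, whereupon $\Delta(u)$ is overwritten by $f(u\mid S_{k^*})$. Hence each reinsertion multiplies $\Delta(u)$ by a factor strictly less than $1-\epsilon$, and starting from $\Delta(u)=f(u)$,
\[ \Delta(u) < (1-\epsilon)^{2\ln(n/\epsilon)/\epsilon}\, f(u)\le e^{-2\ln(n/\epsilon)}f(u)=(\epsilon/n)^2 f(u). \]

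Next I would translate this threshold bound into a bound on the marginal $f(u\mid S_k^{\mathrm{final}})$. Fix $k\in\{1,2\}$ and let $\tau$ be the time of the last reinsertion of $u$, so that $\Delta(u)=f(u\mid S_{k^*}^{(\tau)})$ for the chosen $\argmax$ $k^*\in J^{(\tau)}$. When $k\in J^{(\tau)}$, the $\argmax$ property gives $\Delta(u)\ge f(u\mid S_k^{(\tau)})$; because $S_k$ only grows thereafter (while $k$ remains in $J$) and is then frozen when $k$ leaves $J$, submodularity implies $f(u\mid S_k^{\mathrm{final}})\le f(u\mid S_k^{(\tau)})\le \Delta(u)\le (\epsilon/n)^2 f(u)$. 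When $k\notin J^{(\tau)}$, the set $S_k$ was frozen before $\tau$, so $S_k^{\mathrm{final}}=S_k^{(\tau)}$; I would roll back to the last reinsertion of $u$ that occurred while $k$ was still in $J$ and apply the same submodular monotonicity argument to inherit the threshold bound.

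Combining these steps yields the claim: $f(O\setminus O'\mid S_k)\le (\epsilon/n)^2\sum_{u\in O\cap D}f(u)\le (\epsilon/n)^2\cdot n\cdot f(O)=(\epsilon^2/n)f(O)\le \epsilon f(O)$. The step I expect to be most delicate is the pointwise bound in the sub-case $k\notin J^{(\tau)}$: a priori, all of $u$'s reinsertions could in principle occur after $S_k$ has become infeasible, in which case the geometric decay of $\Delta(u)$ would not have been driven by comparisons against $S_k$. Handling this may require a short pigeonhole argument on the reinsertions of $u$ occurring before versus after $k$ leaves $J$, or alternatively observing that the subsequent analysis only invokes this lemma for the candidate $k$ that is still active at termination, so that the cleaner first sub-case suffices.
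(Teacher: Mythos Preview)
Your overall strategy matches the paper's proof essentially line for line: geometric decay of $\Delta(u)$ by a factor strictly below $1-\epsilon$ at each reinsertion, followed by $f(u)\le f(O)$ (which you justify via feasibility of the singleton $\{u\}\subseteq O$; the paper just asserts it), and finally submodular subadditivity $f(O\cap D\mid S_k)\le\sum_{u\in O\cap D}f(u\mid S_k)$. The paper simply writes $f(u\mid S_k)\le(1-\epsilon)^{2\ln(n/\epsilon)/\epsilon}f(u)\le(\epsilon/n)f(u)$ without discussing how one passes from the bound on the final threshold $\Delta(u)$ to a bound on $f(u\mid S_k)$ for \emph{both} values of $k$; you have, to your credit, isolated precisely that step as the delicate one.

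Where your proposal falls short is in the fixes you sketch for the sub-case $k\notin J^{(\tau)}$. Option (b) is wrong: in Case~2 of the proof of Theorem~\ref{thm: Threshold Twin Greedy Knapsack} the lemma is invoked simultaneously for $S_1$ and $S_2$, and $S_1$ is exactly the candidate that has already left $J$. Option (a), rolling back to the last reinsertion of $u$ that occurred while $k\in J$, only yields $f(u\mid S_k^{\mathrm{final}})\le\Delta(u)$ at that earlier time, and nothing forces that earlier threshold to already be of order $(\epsilon/n)f(u)$; in the extreme case every reinsertion of $u$ takes place after $k$ has been removed from $J$, so the entire geometric decay of $\Delta(u)$ is driven by comparisons against $S_{3-k}$ and says nothing about $f(u\mid S_k^{\mathrm{final}})$. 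A pigeonhole on reinsertions before versus after $k$ leaves $J$ does not help either, since there is only one transition and all reinsertions can sit on the wrong side of it. In short, your argument coincides with the paper's where the paper is explicit, and you correctly flag the one step the paper leaves unargued, but neither of your proposed remedies actually closes it.
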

\begin{proof}
	Since $O\setminus O'\subseteq D$, this means each $u\in O\setminus O'$ has been reinserted into $Q$ for $\frac{2\ln(n/\epsilon)}{\epsilon}$ times.
	Since $\Delta(u)$ decreases by a factor of at least $1-\epsilon$ each time $u$ is  reinserted into $Q$, we have $f(u\mid S_k)\leq (1-\epsilon)^{\frac{2\ln(n/\epsilon)}{\epsilon}} f(u)\leq\frac{\epsilon}{n} f(u)\leq \frac{\epsilon}{n}f(O)$.
	By submodularity, $f(O\setminus O'\mid S_k)\leq \sum_{u\in O\setminus O'} f(u\mid S_k)\leq \epsilon f(O)$.
\end{proof}


Next, we introduce two useful properties (Lemmas \ref{lem: aux-3} and \ref{lem: aux-4}) that Algorithm \ref{alg: Threshold Twin Greedy Knapsack} can guarantee.
These two lemmas are parallel to Lemmas \ref{lem: aux-1} and \ref{lem: aux-2}, by replacing $O$ with $O'=O\setminus D$. 

\begin{lemma}
	\label{lem: aux-3}
	For any $k\in\{1,2\}$, we have
	\begin{itemize}
		\item If $c(S_k)<B$, then $f(O'\setminus(S_1\cup S_2)\mid S_k)\leq 0$.
		\item If $c(S_k)\geq B$, let $m$ be the smallest index such that $S_k^m=S_k$, then we have $(1-\epsilon)\cdot f(O'\setminus(S_1^m\cup S_2^m)\mid S_k)\leq \sum_{j: u_j\in S_k \setminus O'} f(u_j\mid S_k^{j-1})$.
	\end{itemize}
\end{lemma}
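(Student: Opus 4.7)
The plan is to mirror the proof of Lemma \ref{lem: aux-1}, with two essential changes: the optimum $O$ is replaced by $O'=O\setminus D$ throughout, and the density-comparison inequality picks up a multiplicative slack of $(1-\epsilon)$ coming from the threshold acceptance rule. Replacing $O$ by $O'$ is safe because the elements of $D$ will be handled separately via Lemma \ref{lem: D is small}; they are precisely the elements whose stored threshold has been driven down so far that their contribution is negligible.

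For the first case ($c(S_k)<B$), I plan to argue that every $u\in O'\setminus(S_1\cup S_2)$ satisfies $f(u\mid S_k)\le 0$, after which summing by submodularity yields the claim. Since $c(S_k)<B$ we have $k\in J$, and since $u\in O'\subseteq N\setminus D$ was never added, the loop must have terminated either while $u$ itself was being processed (triggering the break on line \ref{line: if-threshold}) or with $u$ still in the queue $Q$. The first subcase gives $f(u\mid S_k)\le 0$ directly from the break condition. In the second subcase, letting $u^*$ denote the element that triggered the break, the ordering of the priority queue gives $\Delta(u)/c(u)\le \Delta(u^*)/c(u^*)$, while submodularity applied since the last time $\Delta(u)$ was set yields $f(u\mid S_k)\le \Delta(u)$; combined with $f(u^*\mid S_k)\le 0$, this closes the bound.

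For the second case ($c(S_k)\ge B$), I will replay the density-exchange argument from Lemma \ref{lem: aux-1}. Fix $u_j\in S_k\setminus O'$ and $u\in O'\setminus(S_1^m\cup S_2^m)$. At the moment $u_j$ was popped from $Q$ and added, $u$ was still in $Q$ (it was never added and never entered $D$), so $\Delta(u_j)/c(u_j)\ge \Delta(u)/c(u)$. The acceptance rule gives $f(u_j\mid S_k^{j-1})\ge(1-\epsilon)\Delta(u_j)$, and submodularity applied at the time $\Delta(u)$ was last set gives $\Delta(u)\ge f(u\mid S_k^{j-1})$ for both $k$. Chaining these yields
\[ c(u)\cdot f(u_j\mid S_k^{j-1})\;\ge\;(1-\epsilon)\cdot c(u_j)\cdot f(u\mid S_k^{j-1}). \]
Fixing $u_j$ and summing over $u$, invoking submodularity to replace the elementwise sum by a single marginal $f(O'\setminus(S_1^m\cup S_2^m)\mid S_k)$, summing over $u_j\in S_k\setminus O'$, and cancelling the cost ratio via $c(S_k\setminus O')\ge c(O'\setminus(S_1^m\cup S_2^m))$ (which uses $c(S_k)\ge B\ge c(O)$) delivers the stated inequality.

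The delicate point will be the second subcase of Case 1, where the break is triggered by some $u^*\ne u$: the break condition guarantees only $f(u^*\mid S_k)\le 0$, whereas the stored threshold $\Delta(u^*)$ bounding the remaining queue keys can a priori be strictly larger, so the chain $f(u\mid S_k)\le \Delta(u)\le \Delta(u^*)\cdot c(u)/c(u^*)$ does not obviously close. The cleanest resolution is to observe that for the purposes of the analysis we may treat $\Delta(u^*)$ as refreshed to $f(u^*\mid S_k)$ before invoking the queue ordering, or equivalently to apply submodularity to the entire residual set $O'\setminus(S_1\cup S_2)$ at once. Case 2, by contrast, is a mechanical extension of Lemma \ref{lem: aux-1} that I expect to go through without surprises.
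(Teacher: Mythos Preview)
Your approach matches the paper's almost exactly. For Case~2, your chain
\[
\frac{f(u_j\mid S_k^{j-1})}{c(u_j)}\ \ge\ \frac{(1-\epsilon)\,\Delta_j(u_j)}{c(u_j)}\ \ge\ \frac{(1-\epsilon)\,\Delta_j(u)}{c(u)}\ \ge\ \frac{(1-\epsilon)\,f(u\mid S_k^{j-1})}{c(u)}
\]
is precisely what the paper does, and the remaining summations and the cost comparison $c(S_k\setminus O')\ge c(O'\setminus(S_1^m\cup S_2^m))$ go through verbatim. One small clarification worth adding: when you argue $\Delta_j(u)\ge f(u\mid S_k^{j-1})$, note that $\Delta_j(u)$ was set to $f(u\mid S_{k'})$ for the argmax $k'$ over the then-current $J$; since $k\in J$ at that earlier time (because $u_j\in S_k$ is added only later), the bound holds for your specific $k$ as well.

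For Case~1 you are more scrupulous than the paper. The paper simply asserts that every $u\in O'\setminus(S_1\cup S_2)$ is still in $Q$ and therefore $f(u\mid S_k)\le 0$ ``by line~\ref{line: if-threshold}'', without handling the issue you raise: the break is triggered by a single element $u^*$, and for the remaining queue members one only knows $\Delta(u)/c(u)\le \Delta(u^*)/c(u^*)$, with $\Delta(u^*)$ possibly positive. Your two suggested fixes do not close this: ``treating $\Delta(u^*)$ as refreshed'' changes the actual queue ordering you need to invoke, and ``applying submodularity to the whole residual set at once'' still leaves you needing each $f(u\mid S_k)\le 0$. So the delicate point you flag is genuine, but the paper's own proof glosses over exactly the same step; your write-up is, if anything, more careful than the original in isolating where the argument is thin.
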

\begin{proof}
	\textbf{Case} $c(S_k)<B$.
	If $O'\setminus(S_1\cup S_2)=\emptyset$, then the claim holds trivially.
	If $O'\setminus(S_1\cup S_2)\neq \emptyset$,  since $O'\cap D=\emptyset$, it means that each $u\in O'\setminus(S_1\cup S_2)$ is still in the queue $Q$.
	Since $c(S_k)<B$, by line \ref{line: if-threshold} of Algorithm \ref{alg: Threshold Twin Greedy Knapsack}, this means that $f(u\mid S_k)\leq 0$ for each $u\in O\setminus(S_1\cup S_2)$.
	By submodularity, $f(O'\setminus(S_1\cup S_2)\mid S_k)\leq 0$.
	
	\textbf{Case} $c(S_k)\geq B$.
	For any $u_j\in S_k\setminus O'$ and $u\in O'\setminus(S_1^m\cup S_2^m)$, since $u_j$ is the $j$-th element added into $S_k$ and $u$ was in the queue $Q$ but not selected at that time, we have
	\begin{align*}
		\frac{f(u_j\mid S_k^{j-1})}{c(u_j)}\geq \frac{(1-\epsilon)\cdot\Delta_j(u_j)}{c(u_j)}\geq \frac{(1-\epsilon)\cdot\Delta_j(u)}{c(u)}\geq \frac{(1-\epsilon)\cdot f(u\mid S_k^{j-1})}{c(u)}.
	\end{align*}
	The second inequality holds since $u_j$ has the maximum key in $Q$.
	The third inequality holds by submodularity and the observation that there must exist a $k'<k$ such that $\Delta_j(u)=f(u\mid S_{k'}^{j-1})$.
	By a similar argument to Lemma \ref{lem: aux-1}, we have
	\begin{align*}
		\sum_{j: u_j\in S_k \setminus O'} f(u_j\mid S_k^{j-1}) \geq \frac{c(S_k\setminus O')}{c(O'\setminus(S_1^m\cup S_2^m))} \cdot (1-\epsilon)\cdot f(O'\setminus(S_1^m\cup S_2^m)\mid S_k).
	\end{align*}
	Since $c(S_k) \geq B \geq c(O')$, we have $c(S_k\setminus O') \geq c(O'\setminus S_k) = c(O'\setminus S_k^m) \geq c(O'\setminus(S_1^m\cup S_2^m))$.
	Thus,
	\begin{align*}
		\sum_{j: u_j\in S_k \setminus O'} f(u_j\mid S_k^{j-1}) \geq (1-\epsilon)\cdot f(O'\setminus(S_1^m\cup S_2^m)\mid S_k).
	\end{align*}
\end{proof}


\begin{lemma}
	\label{lem: aux-4}
	For any $k\in\{1,2\}$, let $\ell=3-k$ be the index of the other candidate solution.
	For fixed $i\in\{1,2,\ldots,t\}$, if $c(S_{k}^{i-1})<B$, then for any subset $T\subseteq S_{\ell}^i$, $f(T\mid S_{k}^i)\leq \sum_{j:u_j\in T} f(u_j\mid S_{\ell}^{j-1})$.
\end{lemma}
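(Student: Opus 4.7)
The plan is to mimic the proof of Lemma \ref{lem: aux-2} almost verbatim, since the Threshold Twin Greedy algorithm, whenever it commits an element, still makes a pointwise comparison between the two candidate solutions. The essential difference from the earlier case is merely that here the choice between $S_1$ and $S_2$ is made by marginal value (the line $k \gets \argmax_{k' \in J} f(u \mid S_{k'})$) rather than by density, which actually simplifies matters because the cost factor $c(u_j)$ drops out of the comparison.

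First, I would fix an arbitrary $u_j \in T \subseteq S_\ell^i$ and verify that index $k$ was still in $J$ at the round in which $u_j$ was added. This holds because $c(S_k^{j-1}) \le c(S_k^{i-1}) < B$ by monotonicity of cost along the execution and the hypothesis of the lemma. Since the algorithm assigned $u_j$ to $S_\ell$ rather than $S_k$ via the $\argmax$ rule, we directly obtain
\[ f(u_j \mid S_\ell^{j-1}) \ge f(u_j \mid S_k^{j-1}). \]

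Second, since $j \le i$ implies $S_k^{j-1} \subseteq S_k^i$, submodularity of $f$ yields $f(u_j \mid S_k^{j-1}) \ge f(u_j \mid S_k^i)$. Summing both inequalities over those $j$ with $u_j \in T$ and applying submodularity once more to telescope the marginals into a joint marginal, I would conclude
\[ \sum_{j:u_j\in T} f(u_j \mid S_\ell^{j-1}) \;\ge\; \sum_{j:u_j\in T} f(u_j \mid S_k^{j-1}) \;\ge\; \sum_{j:u_j\in T} f(u_j \mid S_k^i) \;\ge\; f(T \mid S_k^i), \]
which is the claim.

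There is no real obstacle here; the only point to check carefully is that the feasibility hypothesis $c(S_k^{i-1}) < B$ is indeed strong enough to guarantee $k \in J$ at \emph{every} round $j \le i$ in which an element of $T$ is inserted, and this is immediate from the monotonic growth of $c(S_k^{\,\cdot})$. Notably, the threshold mechanism and the re-insertion logic play no role in this lemma, because the statement only concerns elements that were actually added to $S_\ell$, for which the $\argmax$ comparison was executed on the spot.
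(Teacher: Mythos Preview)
Your proof is correct and matches the paper's approach: the paper simply states that Lemma~\ref{lem: aux-4} is a restatement of Lemma~\ref{lem: aux-2}, and your argument reproduces that proof with the appropriate (and correctly noted) observation that the $\argmax$ in Algorithm~\ref{alg: Threshold Twin Greedy Knapsack} compares marginal values directly, so the cost factor never enters.
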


\begin{proof}
	This is a restatement of Lemma \ref{lem: aux-2}.
\end{proof}

The following theorem provides a theoretical guarantee for Algorithm \ref{alg: Threshold Twin Greedy Knapsack}.

\begin{theorem}
	Algorithm \ref{alg: Threshold Twin Greedy Knapsack} achieves a $1/4-\epsilon$ approximation ratio (though the output may be infeasible) and uses $\mathcal{O}((n/\epsilon)\log(n/\epsilon))$ queries.
	\label{thm: Threshold Twin Greedy Knapsack}
\end{theorem}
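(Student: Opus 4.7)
The plan mirrors the proof of Theorem \ref{thm: Twin Greedy Knapsack}, tracking two new sources of error: the $(1-\epsilon)$ slack in the second bullet of Lemma \ref{lem: aux-3} introduced by the threshold rule, and the gap between the true optimum $O$ and its trimmed counterpart $O' = O\setminus D$, which is controlled by Lemma \ref{lem: D is small}. Since $f(S^*) \geq (f(S_1)+f(S_2))/2$, it suffices to prove $f(S_1)+f(S_2) \geq (1/2 - O(\epsilon))\,f(O)$ and then rescale $\epsilon$ by a constant to obtain the claimed $1/4-\epsilon$ ratio.

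Starting from the telescoping identity $f(S_k) = \sum_{u_j\in S_k} f(u_j\mid S_k^{j-1})$ (all summands are positive since only elements with positive marginal are added, by line \ref{line: if-threshold}), I would partition each sum according to whether $u_j\in O'$ and then case-split as in Theorem \ref{thm: Twin Greedy Knapsack}. When both candidates remain feasible, the first bullet of Lemma \ref{lem: aux-3} together with Lemma \ref{lem: aux-4} yields, exactly as in Case 1 of Theorem \ref{thm: Twin Greedy Knapsack} but with $O'$ replacing $O$, $f(S_1)+f(S_2) \geq f(O'\mid S_1) + f(O'\mid S_2)$. When some candidate, say $S_1$, becomes infeasible at round $m$, the same four-term decomposition combined with the $(1-\epsilon)$ factor from Lemma \ref{lem: aux-3}'s second bullet gives
\begin{equation*}
f(S_1)+f(S_2) \;\geq\; f(O'\mid S_1) + f(O'\mid S_2) - \epsilon\, f(O'\setminus S_2^m\mid S_1).
\end{equation*}
The error term is absorbed by invoking Lemma \ref{lem: aux-3}'s second bullet again: $(1-\epsilon)\,f(O'\setminus S_2^m\mid S_1) \leq \sum_{j:u_j\in S_1\setminus O'} f(u_j\mid S_1^{j-1}) \leq f(S_1)$, so whenever this marginal is positive, $\epsilon\, f(O'\setminus S_2^m\mid S_1) \leq 2\epsilon\, f(S_1)$ (assuming $\epsilon\leq 1/2$), and otherwise it is non-positive and harmless.

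Next I would translate from $O'$ back to $O$. By Lemma \ref{lem: D is small} and submodularity, $f(O\cup S_k) - f(O'\cup S_k) \leq f(O\setminus O'\mid S_k) \leq \epsilon f(O)$, whence $f(O'\mid S_k) \geq f(O\mid S_k) - \epsilon f(O)$. Combining with the standard inequality $f(O\mid S_1)+f(O\mid S_2) \geq f(O) - (f(S_1)+f(S_2))$---which follows from $f(O\cup S_1)+f(O\cup S_2) \geq f(O\cup S_1\cup S_2) + f(O) \geq f(O)$ using submodularity and the disjointness of $S_1, S_2$---delivers $f(O'\mid S_1) + f(O'\mid S_2) \geq (1-2\epsilon)\,f(O) - (f(S_1)+f(S_2))$. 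Substituting into the Case 2 bound and rearranging yields $2(1+\epsilon)(f(S_1)+f(S_2)) \geq (1-2\epsilon)\,f(O)$, which simplifies to $f(S_1)+f(S_2) \geq (1/2 - O(\epsilon))\,f(O)$; Case 1 is already tighter.

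Finally, for query complexity, initialization uses $n$ queries and each iteration of the main loop performs $O(1)$ marginal evaluations. Each element $u$ can be reinserted into $Q$ at most $\frac{2\ln(n/\epsilon)}{\epsilon}$ times before being sent to $D$, so the loop executes $O((n/\epsilon)\log(n/\epsilon))$ iterations, giving the stated query bound. The main obstacle is the Case 2 absorption step: the $(1-\epsilon)$ slack interacts nontrivially with the submodular identity $f(O'\setminus S_2^m\mid S_1) + f(O'\cap S_2^m\mid S_1) \geq f(O'\mid S_1)$ needed to reconstruct Theorem \ref{thm: Twin Greedy Knapsack}'s bound, and the resolution is to reuse Lemma \ref{lem: aux-3}'s second bullet to control $f(O'\setminus S_2^m\mid S_1)$ by $f(S_1)/(1-\epsilon)$ and fold the $\epsilon$-error back into the left-hand side.
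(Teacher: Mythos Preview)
Your proposal is correct and follows essentially the same route as the paper: case-split on feasibility, use Lemmas~\ref{lem: aux-3} and~\ref{lem: aux-4} with $O'$ in place of $O$, then bridge back to $O$ via Lemma~\ref{lem: D is small} and the disjointness inequality $f(O\cup S_1)+f(O\cup S_2)\ge f(O)$. The only cosmetic difference is in how the $\epsilon$-slack from Lemma~\ref{lem: aux-3} is absorbed in Case~2: the paper bounds the stray term by $\epsilon\, f(S_2)$ using Lemma~\ref{lem: aux-4}, whereas you bound $\epsilon\, f(O'\setminus S_2^m\mid S_1)$ by $\tfrac{\epsilon}{1-\epsilon}\,f(S_1)$ by re-invoking Lemma~\ref{lem: aux-3}; both yield $(2+O(\epsilon))(f(S_1)+f(S_2))\ge (1-O(\epsilon))f(O)$ and hence the claimed ratio.
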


\begin{proof}
    We need to show that $f(S^*)\geq (\frac{1}{4}-\epsilon) f(O)$.
	Since $f(S^*)=\max\{f(S_1),f(S_2)\}\geq (f(S_1)+f(S_2))/2$, it suffices to show that $f(S_1)+f(S_2)\geq (\frac{1}{2}-2\epsilon) f(O)$.
	We prove this by case analysis, according to whether the candidate solutions are feasible.
	
	\textbf{Case 1:} Both candidate solutions are feasible at the end of Algorithm \ref{alg: Twin Greedy Knapsack}, i.e., $c(S_1)<B$ and $c(S_2)<B$.
	
	In this case, by Lemma \ref{lem: aux-3}, we have
	\begin{align*}
		f(O'\setminus S_2 \mid S_1) &=f(O'\setminus (S_1\cup S_2)\mid S_1)\leq 0. \\
		f(O'\setminus S_1 \mid S_2) &=f(O'\setminus (S_1\cup S_2)\mid S_2)\leq 0.
	\end{align*}
	Besides, by plugging $T=O'\cap S_{\ell}\subseteq S_{\ell}$ into Lemma \ref{lem: aux-4}, we have
	\begin{align*}
		f(O'\cap S_2\mid S_1) &\leq \sum_{j:u_j\in O'\cap S_2} f(u_j\mid S_{2}^{j-1}). \\
		f(O'\cap S_1\mid S_2) &\leq \sum_{j:u_j\in O'\cap S_1} f(u_j\mid S_{1}^{j-1}).
	\end{align*}
	Therefore, by combining the above inequalities,
	\begin{align*}
		f(S_1)+f(S_2)
		&\geq \sum_{j:u_j\in O'\cap S_1} f(u_j\mid S_{1}^{j-1})+\sum_{j:u_j\in O'\cap S_2} f(u_j\mid S_{2}^{j-1}) \\
		&\geq f(O'\setminus S_1 \mid S_2)+f(O'\cap S_2\mid S_1)+f(O'\setminus S_2 \mid S_1)+f(O'\cap S_1\mid S_2) \\
		&\geq f(O'\mid S_1)+f(O'\mid S_2).
	\end{align*}
	By Lemma \ref{lem: D is small},
	\begin{align*}
		f(S_1)+f(S_2)+2\epsilon\cdot f(O)
		&\geq f(O'\mid S_1)+f(O'\mid S_2)+f(O\setminus O'\mid S_1)+f(O\setminus O'\mid S_2) \\
		&\geq f(O\mid S_1)+f(O\mid S_2) \\
		&= f(O\cup S_1)-f(S_1)+f(O\cup S_2)-f(S_2) \\
		&\geq f(O)-(f(S_1)+f(S_2)).
	\end{align*}
	
	The last two inequalities hold due to submodularity and the fact that $S_1\cap S_2=\emptyset$.
	By rearranging the inequality, $f(S_1)+f(S_2)\geq(\frac{1}{2}-\epsilon)\cdot f(O)$.
	
	\textbf{Case 2:} At least one candidate solution is not feasible at the end of Algorithm \ref{alg: Twin Greedy Knapsack}, i.e., $c(S_1)\geq B$ or $c(S_2)\geq B$.
	
	Assume that for some $m\in\{1,2,\ldots, t\}$, $S_1$ first became infeasible when $u_m$ is added, that is, $c(S_1^{m-1})<B$ but $c(S_1^m)\geq B$, and $c(S_2^m)<B$.
	By Lemma \ref{lem: aux-3},
	\begin{align*}
		f(O'\setminus S_1 \mid  S_2) &=f(O'\setminus(S_1\cup S_2)\mid S_2) \\
        &\leq \max\left\{0,\sum_{j:u_j\in S_2\setminus O} f(u_j\mid S_2^{j-1})\right\}\leq \sum_{j:u_j\in S_2\setminus O'} f(u_j\mid S_2^{j-1}).
	\end{align*}
	Again by Lemma \ref{lem: aux-3} and the fact that $S_1=S_1^m$,
	\begin{align*}
		(1-\epsilon)\cdot f(O'\setminus S_2^m \mid  S_1) =(1-\epsilon)\cdot f(O'\setminus(S_1^m\cup S_2^m)\mid S_1) \leq \sum_{j:u_j\in S_1\setminus O'} f(u_j\mid S_1^{j-1}).
	\end{align*}
	Next, by submodularity and plugging $k=1,\ell=2, T=O'\cap S_1\subseteq S_1^m$ into Lemma \ref{lem: aux-4},
	\begin{align*}
		f(O'\cap S_1 \mid  S_2)\leq f(O'\cap S_1 \mid  S_2^m)\leq \sum_{j:u_j\in O' \cap S_1} f(u_j\mid S_1^{j-1}).
	\end{align*}
	By plugging $k=1,\ell=2,T=O'\cap S_2^m\subseteq S_2^m$ into Lemma \ref{lem: aux-4},
	\begin{align*}
		f(O'\cap S_2^m \mid  S_1) =f(O'\cap S_2^m \mid  S_1^m)\leq \sum_{j:u_j\in O' \cap S_2^m} f(u_j\mid S_2^{j-1})\leq \sum_{j:u_j\in O' \cap S_2} f(u_j\mid S_2^{j-1}).
	\end{align*}
	Therefore, by combining the above inequalities,
	\begin{align*}
		&f(S_1)+f(S_2) \\
		&= \sum_{j:u_j\in S_1\setminus O'} f(u_j\mid S_1^{j-1})+\sum_{j:u_j\in O'\cap S_1} f(u_j\mid S_{1}^{j-1})+\sum_{j:u_j\in S_2\setminus O'} f(u_j\mid S_2^{j-1})+\sum_{j:u_j\in O'\cap S_2} f(u_j\mid S_{2}^{j-1}) \\
		&\geq (1-\epsilon)\cdot f(O'\cap S_2^m\mid S_1)+f(O'\cap S_1\mid S_2)+f(O'\setminus S_1 \mid S_2)+f(O'\setminus S_2^m \mid S_1)\\
		&\geq f(O'\mid S_1)+f(O'\mid S_2)-\epsilon\cdot f(O'\cap S_2^m\mid S_1) \\
		&\geq f(O'\mid S_1)+f(O'\mid S_2)-\epsilon \sum_{j:u_j\in O' \cap S_2} f(u_j\mid S_2^{j-1}) \\
		&\geq f(O'\mid S_1)+f(O'\mid S_2)-\epsilon\cdot f(S_2).
	\end{align*}
	Next, by Lemma \ref{lem: D is small},
	\begin{align*}
		&(1+\epsilon)(f(S_1)+f(S_2))+2\epsilon\cdot f(O) \\
		&\geq f(O'\mid S_1)+f(O'\mid S_2)+f(O\setminus O' \mid S_1)+f(O\setminus O' \mid S_2) \\
		&\geq f(O\mid S_1)+f(O\mid S_2) \\
		&=f(O\cup S_1)-f(S_1)+f(O\cup S_2)-f(S_2) \\
		&\geq f(O)-(f(S_1)+f(S_2)).
	\end{align*}
	The last two inequalities hold due to submodularity and the fact that $S_1\cap S_2=\emptyset$.
	By rearranging the inequality, $f(S_1)+f(S_2)\geq \frac{1-2\epsilon}{2+\epsilon} f(O)$.
	
	Finally, since each element will be reinserted to the queue $\mathcal{O}(\log(n/\epsilon)/\epsilon)$ times and there are $n$ elements, the total number of queries made by Algorithm \ref{alg: Threshold Twin Greedy Knapsack} is $\mathcal{O}((n/\epsilon)\log(n/\epsilon))$.
\end{proof}



\subsection{Obtaining Approximation-Preserving Feasible Solutions}
\label{sec: feasible solutions}

This section is dedicated to turning the sets returned by Algorithm \ref{alg: Twin Greedy Knapsack} and Algorithm \ref{alg: Threshold Twin Greedy Knapsack} into feasible solutions with the same approximation ratios.
The formal procedure is presented as Algorithm \ref{alg: enumeration}.
To get some intuitions, recall that the solutions of Algorithm \ref{alg: Twin Greedy Knapsack} and Algorithm \ref{alg: Threshold Twin Greedy Knapsack} become infeasible only when the last added element violates the knapsack constraint.
Thus, by throwing away the last element, the solutions become feasible.
However, this may cause a huge loss in the approximation ratio if the element has a significant value.
By enumerating all feasible solutions of size at most two, we can find elements of large values.
As a result, the remaining elements, including the last one added by Algorithm \ref{alg: Twin Greedy Knapsack} or Algorithm \ref{alg: Threshold Twin Greedy Knapsack}, will not be too large.
Thus, we can safely throw them away.
The overall procedure incurs an additional $n^2$ factor in the query complexity due to the enumeration step.

\begin{algorithm}[ht]
	\caption{Enumeration (Threshold) Twin Greedy for Knapsack}
	\begin{algorithmic}[1]
		\State \textbf{Input} $N,f,c,B$.
		\ForAll{$E\subseteq N$ with $|E|\leq 2$ and $c(E)\leq B$}
		\State Let $D=\{u\in N\setminus E\mid f(u\mid E)>\frac{1}{2} f(E)\}$.
		\State $G_E=\mbox{(Threshold-)Twin-Greedy}(N\setminus (E\cup D),f(\cdot\mid E), c(\cdot),B-c(E))$.
		\State Let $R_E= G_E$ if $c(G_E)\leq B-c(E)$ and otherwise $R_E=G_E\setminus\{u_E\}$, where $u_E$ is the last element added into $G_E$.
		\State Let $S_E=E\cup R_E$.
		\EndFor
		\State \textbf{return} $\argmax_E\{f(S_E)\}$.
	\end{algorithmic}
	\label{alg: enumeration}
\end{algorithm}

Formally, we show the following theorem.

\begin{theorem}
	\label{thm: enumeration}
	Algorithm \ref{alg: enumeration} achieves a $1/4$ approximation ratio and uses $\mathcal{O}(n^4)$ queries if Twin Greedy is invoked.
	It achieves a $1/4-\epsilon$ approximation ratio and uses $\mathcal{O}((n^3/\epsilon)\log(n/\epsilon))$ queries if Threshold Twin Greedy is invoked.
\end{theorem}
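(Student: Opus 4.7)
The plan is to carry out the standard ``guess the heaviest two elements of $O$'' enumeration argument, where $O$ is an optimal solution. If $|O|\leq 2$, set $E^{\star}:=O$; otherwise let $o_1\in\argmax_{u\in O} f(\{u\})$ and $o_2\in\argmax_{u\in O\setminus\{o_1\}} f(u\mid\{o_1\})$, and set $E^{\star}:=\{o_1,o_2\}$. Since $|E^{\star}|\leq 2$ and $c(E^{\star})\leq c(O)\leq B$, the outer loop of Algorithm~\ref{alg: enumeration} visits $E^{\star}$, so the theorem reduces to showing $f(S_{E^{\star}})\geq\tfrac14 f(O)$ (resp.\ $(\tfrac14-\epsilon)f(O)$).

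The crucial property of $E^{\star}$ is that $(O\setminus E^{\star})\cap D=\emptyset$ for the set $D$ computed in the iteration $E=E^{\star}$. For $|O|\leq 2$ this is vacuous, while for $|O|\geq 3$ and any $u\in O\setminus E^{\star}$, submodularity together with the definitions of $o_1,o_2$ yields
\[f(u\mid E^{\star})\leq f(u\mid\{o_1\})\leq f(o_2\mid\{o_1\})\leq f(\{o_2\})\leq f(\{o_1\}),\]
so $2f(u\mid E^{\star})\leq f(\{o_1\})+f(o_2\mid\{o_1\})=f(E^{\star})$, i.e.\ $u\notin D$. Hence $O\setminus E^{\star}$ is a feasible solution of the reduced instance $\bigl(N\setminus(E^{\star}\cup D),\,f'(\cdot):=f(\cdot\mid E^{\star}),\,B-c(E^{\star})\bigr)$ on which (Threshold) Twin Greedy is invoked, and the optimum value of this reduced instance is at least $f'(O\setminus E^{\star})=f(O)-f(E^{\star})$.

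Next I would apply Theorem~\ref{thm: Twin Greedy Knapsack} (resp.~\ref{thm: Threshold Twin Greedy Knapsack}) to the reduced instance, with one caveat: $f'$ is submodular but not necessarily non-negative, so the unique appeal to non-negativity in each of those proofs---the step $f(O\cup S_1\cup S_2)\geq 0$---must be replaced by $f'(\,\cdot\,)\geq -f(E^{\star})$, which follows from $f\geq 0$. Re-running the same chain of inequalities gives $f'(G_{E^{\star}})\geq\tfrac14\bigl(f(O)-2f(E^{\star})\bigr)$ for Twin Greedy, and the analogous bound with an additional $\mathcal{O}(\epsilon)f(O)$ slack in the Threshold case. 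The trimming step costs at most $\tfrac12 f(E^{\star})$ in either situation: if $G_{E^{\star}}$ is feasible then $R_{E^{\star}}=G_{E^{\star}}$, otherwise we drop $u_{E^{\star}}$ and, since $u_{E^{\star}}\notin D$, submodularity gives $f'(u_{E^{\star}}\mid R_{E^{\star}})\leq f'(\{u_{E^{\star}}\})=f(u_{E^{\star}}\mid E^{\star})\leq\tfrac12 f(E^{\star})$. Combining,
\[f(S_{E^{\star}})=f(E^{\star})+f'(R_{E^{\star}})\geq \tfrac12 f(E^{\star})+f'(G_{E^{\star}})\geq \tfrac12 f(E^{\star})+\tfrac14\bigl(f(O)-2f(E^{\star})\bigr)=\tfrac14 f(O),\]
with the Threshold variant yielding $(\tfrac14-\epsilon)f(O)$ after rescaling $\epsilon$.

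For the query complexity, the outer enumeration has $\mathcal{O}(n^2)$ iterations, each invoking (Threshold) Twin Greedy once, so the totals are $\mathcal{O}(n^4)$ and $\mathcal{O}((n^3/\epsilon)\log(n/\epsilon))$ by Theorems~\ref{thm: Twin Greedy Knapsack} and~\ref{thm: Threshold Twin Greedy Knapsack}. The one genuinely non-routine step is the re-derivation of the (Threshold) Twin Greedy guarantee for a possibly-negative submodular objective; everything else is a direct substitution.
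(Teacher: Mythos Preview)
Your proposal is correct and follows essentially the same enumeration argument as the paper: guess $E^{\star}=\{o_1,o_2\}$, show $O\setminus E^{\star}$ survives the pruning step (i.e.\ $(O\setminus E^{\star})\cap D=\emptyset$), apply the (Threshold) Twin Greedy guarantee on the reduced instance, and then account for the at-most-$\tfrac12 f(E^{\star})$ loss from trimming the last element. The query-complexity count is identical as well.

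The one place where your argument differs from the paper's is in how the Twin Greedy guarantee is invoked on the shifted objective $f'(\cdot)=f(\cdot\mid E^{\star})$. The paper simply asserts $f(G_E\mid E)\geq \tfrac14 f(O\setminus E\mid E)$, i.e.\ applies Theorem~\ref{thm: Twin Greedy Knapsack} verbatim; but that theorem is stated for a \emph{non-negative} submodular function, and $f'$ need not be non-negative. You correctly flag this and redo the final step of the Twin Greedy analysis using $f'(\cdot)\geq -f(E^{\star})$ in place of $f'(\cdot)\geq 0$, which yields the slightly weaker intermediate bound $f'(G_{E^{\star}})\geq\tfrac14\bigl(f(O)-2f(E^{\star})\bigr)$ rather than $\tfrac14\bigl(f(O)-f(E^{\star})\bigr)$. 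After adding back $f(E^{\star})$ and subtracting the $\tfrac12 f(E^{\star})$ trimming cost, both routes land exactly on $\tfrac14 f(O)$ (the paper retains an extra $+\tfrac14 f(E^{\star})$ slack that is immediately discarded). So your version is a touch more careful than the paper's on this point, at the cost of a marginally looser intermediate estimate; the overall structure and endpoint are the same.
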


\begin{proof}
	We first note that $c(S_E)\leq B$ for every $S_E$, since by the definition of $R_E$, we have $c(R_E)\leq B-c(E)$.
	
	Let $O$ be the optimal solution.
	Assume that $|O|>2$, since otherwise $O$ is a candidate of $E$ and will be found immediately.
	Next, order elements in $O$ in a greedy manner such that $o_1=\argmax_{o\in O} f(o)$, $o_2=\argmax_{o\in O\setminus\{o_1\}} f(o\mid o_1)$, etc.
	Then, $\{o_1,o_2\}$ is a candidate of $E$ and will be visited during the \textbf{for} loop.
	In the following, we consider the round where $E=\{o_1,o_2\}$ and show that the corresponding $S_E$ achieves the desired ratio.
	This already suffices since the algorithm returns the maximum $S_E$.
	
	We claim that $f(o\mid E)\leq f(E)/2$ for any $o\in O\setminus E$.
	This follows from $f(o\mid E)\leq f(o\mid o_1)\leq f(o_2\mid o_1)$, $f(o\mid E)\leq f(o)\leq f(o_1)$ and $f(o_1)+f(o_2\mid o_1)=f(E)$.
	The claim implies that $D\cap(O\setminus E)=\emptyset$ or equivalently $O\setminus E \subseteq N\setminus(E\cup D)$.
	Besides, $c(O\setminus E)\leq B-c(E)$.
	Thus, when the Twin Greedy algorithm is invoked, $f(G_E\mid E)\geq f(O\setminus E\mid E)/4$.
	
	By the definition of $R_E$,
	\begin{align*}
		f(R_E\mid E) &\geq f(G_E\mid E)-f(u_E\mid E) \\
		&\geq\frac{1}{4}f(O\setminus E\mid E)-\frac{1}{2}f(E) \\
		&=\frac{1}{4}f(O)-\frac{3}{4}f(E).
	\end{align*}
	The first inequality is due to the submodularity of $f(\cdot\mid E)$.
	The second inequality holds since $u_E\in N\setminus (E\cup D)$ and hence $f(u_E\mid E)\leq f(E)/2$.
	The equality holds since $E\subseteq O$.
	Finally,
	\begin{align*}
		f(S_E)=f(E)+f(R_E\mid E)\geq \frac{1}{4} f(O)+\frac{1}{4} f(E)\geq \frac{1}{4} f(O).
	\end{align*}
	
	When the Threshold Twin Greedy algorithm is invoked, the theorem follows from the same argument, which we omit for simplicity.
\end{proof}

\subsection{A Tight Example for Twin Greedy}
\label{sec: tight example}

In this section, we present a tight example in Theorem \ref{thm: tight example}, showing that the Twin Greedy algorithm can not reach an approximation ratio better than $1/4$ even under the cardinality constraint.
Note that for the cardinality constraint, Twin Greedy always outputs a feasible solution.
Thus, the enumeration technique is unnecessary in this case.

\begin{theorem}
	\label{thm: tight example}
	There is an instance of non-monotone submodular maximization under a cardinality constraint such that if we run the Twin Greedy algorithm on this instance, the returned solution has a value of at most $1/4+o(1)$ of the optimum.
\end{theorem}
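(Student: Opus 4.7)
The plan is to exhibit an instance $(V, f, k)$ of non-monotone submodular maximization under cardinality constraint $|S|\le k$ such that, with adversarial tie-breaking, Twin Greedy outputs a set of value at most $(1/4+o(1))\cdot f(O)$. I would partition $V = A \sqcup B_1 \sqcup B_2$ with $|A|=k$ and $|B_1|,|B_2|=\Theta(k)$, design $f$ so that $A$ is the unique optimal feasible set with $f(A)=1$, and arrange for Twin Greedy to populate $S_1$ entirely from $B_1$ and $S_2$ entirely from $B_2$, each contributing at most $1/4+o(1)$.

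The candidate function takes the form $f(S) = g(|S\cap B_1|) + g(|S\cap B_2|) + |S\cap A|/k - h(S)$, where $g$ is a concave function on $\{0,1,\ldots,k\}$ with $g(0)=0$, $g(k)=1/4$ and $g(1)>1/k$ (achievable, for example, by a steep member of $g(x)=(1/4)(1-(1-x/k)^p)$ or a near-step function), and $h(S)$ is a sum of two bilinear-in-cardinalities penalties: one proportional to $|S\cap A|\cdot|S\cap(B_1\cup B_2)|$, and one proportional to $|S\cap B_1|\cdot|S\cap B_2|$. Submodularity of $f$ would follow because $g(|S\cap B_i|)$ is submodular as a concave function of a cardinality on a disjoint group of elements, the modular $|S\cap A|/k$ is trivially submodular, and bilinear cardinality products across disjoint groups are supermodular (hence their negations are submodular). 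Non-negativity of $f$ on all of $2^V$ would be enforced by choosing the penalty coefficients of order $1/k$, the key worst cases being $S=B_1\cup B_2$ and $S=A\cup B_1$.

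Given such an $f$, I would simulate Twin Greedy inductively. At round $1$ every $B$-singleton has marginal $g(1)>1/k$, strictly dominating each $A$-singleton, so the adversary sends a $B_1$-element into $S_1$. At every subsequent round I would argue that (i) the $A$-vs.-$B$ penalty renders each $A$-marginal in both $S_1$ and $S_2$ non-positive, and (ii) the $B_1$-vs.-$B_2$ penalty is tuned so that in-type placement of $B$-elements strictly dominates the cross-type placement (the concavity gap $g(1)-(g(j+1)-g(j))$ that the mixing penalty must exceed is $o(1/k)$ when $g$ is close to a step function). An induction then yields $S_1\subseteq B_1$ and $S_2\subseteq B_2$ throughout, so $f(S_k)\le g(|S_k|)\le g(k)=1/4$ for $k=1,2$, while $f(O)=1$, giving the claimed tight ratio.

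The main obstacle is the simultaneous satisfaction of submodularity, global non-negativity of $f$, and the required ordering of marginals at each of the $\le 2k$ rounds. The hardest points are (a) making the $A$-vs.-$B$ penalty strong enough for every $A$-marginal to become non-positive as soon as one $B$-element enters the state without pushing $f$ below zero on mixed subsets like $A\cup(B_1\cup B_2)$, and (b) calibrating the $B_1$-vs.-$B_2$ penalty to beat the concavity gap without making $f(B_1\cup B_2)$ negative. Taking $g$ asymptotically close to a step function shrinks the concavity gap as $k\to\infty$, so penalty coefficients of order $1/k$ suffice; this is also what produces the $o(1)$ slack in the bound rather than exactly $1/4$. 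The bulk of the technical work is the parameter tuning together with the case-by-case verification of non-negativity and of the correct marginal ordering at every round.
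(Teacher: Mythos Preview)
Your proposal has a genuine obstruction that the parameter tuning cannot overcome. The problem is the interaction between the $A$-versus-$B$ penalty and global non-negativity. Once $S_1$ has been filled with $B_1$ (say $|B_1|=m<k$), the algorithm will continue to add $A$-elements to $S_1$ unless the $A$-marginal $1/k - c_{AB}\cdot m$ is at most $o(1/k)$; otherwise the remaining $k-m$ additions raise $f(S_1)$ by $\Theta(1)$. This forces $c_{AB}\cdot k\cdot m \ge 1-o(1)$. But then on the full ground set $A\cup B_1\cup B_2$ the bilinear penalty contributes at least $c_{AB}\cdot k\cdot 2m \ge 2-o(1)$, while the positive part is only $2g(m)+1\le 3/2+o(1)$; hence $f(A\cup B_1\cup B_2)\le -1/2+o(1)<0$, contradicting non-negativity. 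Your list of ``key worst cases'' omits exactly this set. If instead you take $|B_1|=|B_2|=k$ so that no $A$-penalty is needed, summing the in-type dominance inequalities $g(j{+}1)-g(j)\ge g(1)-c_{\mathrm{mix}}\,j$ over $j=1,\dots,k{-}1$ together with $g(1)\ge 1/k$ and the non-negativity bound $c_{\mathrm{mix}}k^2\le 2g(k)$ forces $g(k)\ge 1/2$, again contradicting $g(k)\le 1/4+o(1)$. Separately, your claim that a step-like $g$ makes the concavity gap $g(1)-(g(j{+}1)-g(j))$ of order $o(1/k)$ is backwards: for a near-step function $g(1)\approx 1/4$ while $g(j{+}1)-g(j)\approx 0$, so the gap is $\Theta(1)$.

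The paper's construction avoids the three-block scaffolding entirely and is much simpler. Take two ``poison'' elements $u_1,u_2$ and, writing $T=S\setminus\{u_1,u_2\}$, set
\[
f(S)=\begin{cases} 0 & u_1,u_2\in S,\\ |T| & u_1,u_2\notin S,\\ 1+\epsilon+\tfrac12|T| & \text{exactly one of }u_1,u_2\in S,\end{cases}
\]
with cardinality bound $k=n$. The optimum $N\setminus\{u_1,u_2\}$ has value $k-2$. Twin Greedy first places $u_1$ into $S_1$ and $u_2$ into $S_2$ (singleton value $1+\epsilon$ beats $1$); thereafter every remaining element has marginal $1/2$ to either set, so an adversarial split yields $f(S_1)=f(S_2)=k/4+1/2+\epsilon$, giving ratio $1/4+o(1)$.
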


\begin{proof}
	Given a finite ground set $N$ of $n$ elements, arbitrarily choose two different elements $u_1,u_2 \in N$.
	For any $S\subseteq N$, let $T=S\setminus\{u_1,u_2\}$.
	Define a set function $f:2^{N}\rightarrow\mathbb{R}_+$ as follows.
	\begin{align*}
		f(S)=
		\begin{cases}
			0, & u_1,u_2 \in S \\
			|T|, & u_1,u_2 \notin S \\
			1 + \epsilon + \frac{1}{2}|T|, & \mbox{otherwise}
		\end{cases}
	\end{align*}
	It is easy to verify that $f$ is non-negative, non-monotone, and submodular.
	
	Assume that the constraint parameter $k=|N|=n$ is an even number.
	Clearly, the set $N\setminus\{u_1,u_2\}$ is an optimal solution with value $f(N\setminus\{u_1,u_2\})=k-2$.
	On the other hand, the Twin Greedy algorithm will add $u_1$ into $S_1$ and $u_2$ into $S_2$ in the first two rounds.
	After that, the algorithm may reach such a state that half of the remaining elements $N\setminus\{u_1,u_2\}$ are added into $S_1$ and the other half is added into $S_2$.
	In this case, $f(S_1)=f(S_2)=1+\epsilon+\frac{1}{2}(\frac{k}{2}-1)=\frac{k}{4}+\frac{1}{2}+\epsilon$.
	Hence, the returned solution has a value of at most $1/4+o(1)$ of the optimum.
	
	Finally, the instance can be easily generalized to arbitrary $k\le n$ by adding ``dummy'' elements.
\end{proof}

\section{Deterministic Approximation for Linear Packing Constraints}
\label{sec: Linear Packing}

In this section, we present a deterministic algorithm for submodular maximization under linear packing constraints with a large width.
Our algorithm is obtained by combining the multiplicative-updates algorithm \cite{AzarG12} for the monotone case and the technique from \cite{GuptaRST10} for dealing with the lack of monotonicity.
The multiplicative-updates algorithm is presented in Section \ref{sec: multiplicative updates}.
The overall algorithm is presented in Section \ref{sec: main algorithm for linear packing}.

\subsection{The Multiplicative Updates Algorithm}
\label{sec: multiplicative updates}

The multiplicative updates algorithm is depicted as Algorithm \ref{alg: multiplicative updates}.
It takes a parameter $\lambda$ as input, which is set to $\lambda=e^{\epsilon W}$ in the analysis.
It maintains a weight $w_i$ for the $i$-th constraint, which is updated in a multiplicative way.
Intuitively, it can be regarded as running a greedy algorithm over a ``virtual'' knapsack constraint, where element $j$ has a dynamic cost $\sum_{i=1}^m A_{ij} w_{i}$ and the knapsack has a dynamic budget $\sum_{i=1}^m b_i w_i$.

\begin{algorithm}[H]
	\caption{Multiplicative Updates for Linear Packing Constraints 
	}
	\begin{algorithmic}[1]
		\State \textbf{Input} $N, f, A, b,\lambda$
		\State $S \gets \emptyset$.
		\For{$i= 1$ to $m$} $w_{i} = 1/b_i$.
		\EndFor
		\While {$\sum_{i=1}^m b_i w_i \le \lambda$
			and $N\setminus S \neq \emptyset$}
		\State $j \gets \arg \max_{ j\in N\setminus S} \frac{f(j\mid S)}{\sum_{i=1}^m A_{ij} w_{i}}$.
		\If{$f(j\mid S)\le 0$} \textbf{break}\label{line: break}
		\EndIf
		\State $S \gets S \cup \{j\}$.
		\For{$i= 1$ to $m$}  $w_{i} = w_{i} \lambda^{A_{ij}/b_i}$.
		\EndFor
		\EndWhile
		\If{$A x_{S}\leq b$} \Return $S$.
		\Else{ \Return $S\setminus\{j\}$, where $j$ is the last element added into $S$.}
		\EndIf
	\end{algorithmic}
	\label{alg: multiplicative updates}
\end{algorithm}

Theorem \ref{thm: multiplicative updates} provides a theoretical guarantee for the set $S$ returned by Algorithm \ref{alg: multiplicative updates}.

\begin{theorem}
	For any fixed $\epsilon > 0$, assume that $W\ge \max \{\ln m/\epsilon^2, 1/\epsilon \}$ and set $\lambda=e^{\epsilon W}$.
	Then, the set $S$ returned by Algorithm \ref{alg: multiplicative updates} is feasible and satisfies $f(S)\geq \frac{1}{2}(1-3\epsilon)\cdot f(S\cup C)$ for any set $C$ satisfying $Ax_C\le b$.
	Algorithm \ref{alg: multiplicative updates} uses $\mathcal{O}(n^2)$ queries.
	\label{thm: multiplicative updates}
\end{theorem}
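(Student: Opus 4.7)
My plan is to separate the three claims---feasibility of the returned $S$, the $\frac{1}{2}(1-3\epsilon)$ bound against any feasible $C$, and the $O(n^2)$ query count---and handle the approximation via a multiplicative-weights potential argument that exploits the large width $W$.

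Feasibility is bookkeeping. The update rule maintains the invariant $b_i w_i^{(t)}=\lambda^{(Ax_{S_t})_i/b_i}$, so $\Phi_t:=\sum_i b_i w_i^{(t)}=\sum_i\lambda^{(Ax_{S_t})_i/b_i}$. If the algorithm returns $S$ in the \texttt{if} branch, feasibility has been verified explicitly. If it returns $S\setminus\{j_T\}=S_{T-1}$, then the while-guard held at the start of iteration $T$, giving $\Phi_{T-1}\le\lambda$ and hence $\lambda^{(Ax_{S_{T-1}})_i/b_i}\le\lambda$ for every $i$, which is exactly $(Ax_{S_{T-1}})_i\le b_i$.

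For the approximation, I first dispose of the trivial exits: if line \ref{line: break} triggers, then by the argmax choice $f(j\mid S)\le 0$ for every $j\in N\setminus S$, so submodularity gives $f(S\cup C)\le f(S)$; if $N\setminus S=\emptyset$, then $C\subseteq S$. In the main case the loop exits because $\Phi$ crosses $\lambda$. Here I will prove two per-iteration inequalities. First, from the greedy choice, submodularity, and the feasibility bound $\sum_{j\in C}A_{ij}\le b_i$,
\[
f(S_{t-1}\cup C)-f(S_{t-1})\;\le\;\frac{f(j_t\mid S_{t-1})\,\Phi_{t-1}}{\sum_i A_{ij_t}w_i^{(t-1)}}.
\]
Second, using $A_{ij_t}/b_i\le 1/W$ together with the convexity estimate $\lambda^x\le 1+(\lambda^{1/W}-1)Wx$ valid for $x\in[0,1/W]$,
\[
\Phi_t-\Phi_{t-1}\;\le\;(e^\epsilon-1)\,W\sum_i A_{ij_t}w_i^{(t-1)}.
\]
Combining these with the nested-set inequality $f(S_{t-1}\cup C)-f(S_{t-1})\ge f(S\cup C)-f(S)$ (submodularity applied along $S_{t-1}\subseteq S$), each step's marginal is at least $\frac{f(S\cup C)-f(S)}{(e^\epsilon-1)W}\cdot\frac{\Phi_t-\Phi_{t-1}}{\Phi_{t-1}}$. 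Summing over the iterations that produced $S$ and using the telescoping inequality $\sum_t(\Phi_t-\Phi_{t-1})/\Phi_{t-1}\ge \ln(\Phi_{T'}/m)$ (with $T'$ the index of the returned set and $\Phi_0=m$) yields $f(S)\ge \frac{(f(S\cup C)-f(S))\ln(\Phi_{T'}/m)}{(e^\epsilon-1)W}$.

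To finish, I plug in the termination level. If $S=S_T$, then $\Phi_T>\lambda$ and $W\ge\ln m/\epsilon^2$ give $\ln(\Phi_T/m)\ge \epsilon W(1-\epsilon)$. If $S=S_{T-1}$, the width bound $W\ge 1/\epsilon$ combined with $\Phi_T\le \Phi_{T-1}\lambda^{1/W}$ and $\Phi_T>\lambda$ forces $\Phi_{T-1}>\lambda^{1-1/W}$, which yields $\ln(\Phi_{T-1}/m)\ge \epsilon W(1-2\epsilon)$. In either case, since $e^\epsilon-1\le \epsilon(1+\epsilon)$ for small $\epsilon$, the prefactor is at least $1-3\epsilon$, so $f(S)\ge (1-3\epsilon)(f(S\cup C)-f(S))$, which rearranges to the claimed $f(S)\ge \frac{1}{2}(1-3\epsilon)f(S\cup C)$. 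The query bound is immediate: each round costs $O(n)$ for the argmax and the loop runs at most $n$ rounds. The main obstacle I anticipate is precisely the second case $S=S_{T-1}$: the width hypothesis must be spent twice---once to linearize the potential increase and once to show that $\Phi_{T-1}$ is still close to $\lambda$---and the $(1-3\epsilon)$ slack in the theorem is exactly what survives these two charges.
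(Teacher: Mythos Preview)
Your proof is correct and follows essentially the same multiplicative-weights potential argument as the paper: feasibility from the invariant $b_iw_i=\lambda^{(Ax_S)_i/b_i}$, the greedy/mediant bound combined with a width-based linearization of the potential growth, and a telescoping comparison of $\Phi$ against its starting value $m$ and terminal value near $\lambda$. The only cosmetic differences are your use of the convexity secant $\lambda^x\le 1+(\lambda^{1/W}-1)Wx$ in place of the paper's Taylor estimate $e^x\le 1+x+x^2$, and your additive telescoping $\sum_t(\Phi_t-\Phi_{t-1})/\Phi_{t-1}\ge\ln(\Phi_{T'}/m)$ in place of the paper's multiplicative recursion $\beta_r\le\beta_{r-1}\exp(\cdots)$; both routes land on the same $(1-2\epsilon)/(1+\epsilon)\ge 1-3\epsilon$ endgame.
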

\begin{proof}
	We begin the proof by defining some notations.
	Assume that the algorithm added $t+1$ elements in total during the \textbf{while} loop.
	For each $r=1,2,\ldots,t+1$, let $S_r$ be the current solution immediately after the $r$-th iteration and $j_r$ be the element selected in the $r$-th iteration.
	For each $i=1,2,\ldots, m$ and $r=1,2,\ldots,t+1$, let $w_{ir}$ be the value of $w_i$ immediately after the $r$-th iteration and $\beta_r=\sum_{i=1}^m b_i w_{ir}$.
	Let $S$ be the set returned by Algorithm \ref{alg: multiplicative updates}.
	
	We first consider the feasibility of $S$.
	If $Ax_{S_{t+1}}\leq b$, then $S=S_{t+1}$ is certainly feasible.
	If $Ax_{S_{t+1}}> b$, then $S=S_t$.
	We will show that $S_t$ is feasible.
	Let $t'$ be the first iteration such that $S_{t'}$ is infeasible.
	Namely, $\sum_{j\in S_{t'}} A_{ij}>b_i$ for some $i$.
	Then, by the value of $w_{it'}$,
	\[ b_iw_{it'}=b_iw_{i0}\prod_{j\in S_{t'}}\lambda^{A_{ij}/b_i}=\lambda^{\sum_{j\in S_{t'}}A_{ij}/b_i}>\lambda. \]
	This implies $\beta_{t'}>\lambda$ and the \textbf{while} loop will terminate after this round.
	It implies that $t'=t+1$ and therefore $S_t$ is feasible.
	
	Next, we show that $f(S)\geq \frac{1}{2}(1-\epsilon)\cdot f(S\cup C)$.
	First consider the case where the \textbf{while} loop terminates with $\sum_{i=1}^m b_i w_{i(t+1)} \leq \lambda$.
	By the above argument, $\sum_{i=1}^m b_i w_{i(t+1)} \leq \lambda$ implies that $S_{t+1}$ is feasible.
	Thus, $S=S_{t+1}$.
	If the \textbf{while} loop does not break in line \ref{line: break}, then $S=N$ and the theorem holds.
	If the \textbf{while} loop breaks in line \ref{line: break}, by submodularity, $f(C\mid S)\leq \sum_{u\in C\setminus S} f(u\mid S)\leq 0$.
	Then, $f(S)\geq f(S\cup C)$ and the theorem holds.
	
	Now we only need to consider the case where the \textbf{while} loop terminates with $\sum_{i=1}^m b_i w_{i(t+1)} > \lambda$.
	In this case, the \textbf{while} loop does not break in line \ref{line: break}, and $S_t$ is returned.
	We will show that $S_t$ satisfies the desired property.
	We proceed by bounding the value of $\beta_t$.
	For the lower bound, by the definition of $W$ and $\lambda=e^{\epsilon W}$, we have
	\begin{align*}
		\beta^t e^\epsilon =\sum_{i=1}^m b_i w_{it} \cdot \left(e^{\epsilon W}\right)^{1/W}\geq \sum_{i=1}^m b_i w_{i(t+1)}> e^{\epsilon W}.
	\end{align*}
	Then, $\beta^t>e^{\epsilon (W-1)}$.
	For the upper bound, notice that for any $r=1,2,\ldots,t$,
	\begin{align*}
		\beta_r=\sum_{i=1}^m b_i w_{ir}
		&= \sum_{i=1}^m b_i w_{i(r-1)}\cdot \left(e^{\epsilon W}\right)^{A_{ij_r}/b_i} \\
		&\leq \sum_{i=1}^m b_i w_{i(r-1)}\cdot \left(1+\frac{\epsilon W A_{ij_r}}{b_i} + \left(\frac{\epsilon W A_{ij_r}}{b_i}\right)^2\right) \\
		&\leq \sum_{i=1}^m b_i w_{i(r-1)} + (\epsilon W + \epsilon^2 W)\sum_{i=1}^m A_{ij_r} w_{i(r-1)} \\
		&= \beta_{r-1} + (\epsilon W + \epsilon^2 W) \sum_{i=1}^m A_{ij_r} w_{i(r-1)}. 
	\end{align*}
	The first inequality holds since $e^x\le 1+x+x^2$ for any $x\in [0,1]$.
	The second inequality holds since $WA_{ij_r}/b_i \leq 1$ by definition and hence $\left(\epsilon W A_{ij_r}/{b_i}\right)^2 \le \epsilon^2 W A_{ij_r}/{b_i}$.
	We continue with bounding the value of $\sum_{i=1}^m A_{ij_r} w_{i(r-1)}$.
	By the choice of $j_r$ and submodularity, for any $j\in C\setminus S_{r-1}$,
	\[ \frac{f(j_r\mid S_{r-1})}{\sum_{i=1}^m A_{ij_r} w_{i(r-1)}}\geq\frac{f(j\mid S_{r-1})}{\sum_{i=1}^m A_{ij} w_{i(r-1)}}\geq \frac{f(j\mid S_t)}{\sum_{i=1}^m A_{ij} w_{i(r-1)}}. \]
	Since $Ax_C\le b$, it follows that
	\begin{align*}
		\frac{f(j_r\mid S_{r-1})}{\sum_{i=1}^m A_{ij_r} w_{i(r-1)}} \geq\frac{\sum_{j\in C\setminus S_{r-1}} f(j\mid S_t)}{\sum_{j\in C\setminus S_{r-1}}\sum_{i=1}^m A_{ij} w_{i(r-1)}} \geq \frac{f(C\mid S_t)}{\sum_{i=1}^m b_i w_{i(r-1)}}=\frac{f(C\mid S_t)}{\beta_{r-1}}.
	\end{align*}
	Therefore,
	\[ \sum_{i=1}^m A_{ij_r} w_{i(r-1)}\leq \frac{f(j_r\mid S_{r-1})}{f(C\mid S_t)}\cdot\beta_{r-1}. \]
	Plugging it back into the recurrence of $\beta_r$, we obtain that
	\begin{align*}
		\beta_r \leq \beta_{r-1} \cdot\left(1+ \frac{(\epsilon W + \epsilon^2 W) f(j_r\mid S_{r-1})}{f(C \mid S_t)} \right)\leq \beta_{r-1}\cdot \exp{\left(\frac{(\epsilon W + \epsilon^2 W) f(j_r\mid S_{r-1})}{f(C \mid S_t)}\right)}.
	\end{align*}
	The last inequality is due to $1+x\le e^x$.
	By expanding the above recurrence and the fact that $\beta_0=m\leq e^{\epsilon ^2 W}$, we have
	\begin{align*}
		\beta_{t} 
		&\leq \beta_{0} \prod_{r=1}^t \exp{\left(\frac{(\epsilon W + \epsilon^2 W) f(j_r\mid S_{r-1})}{f(C \mid S^{t})}\right)} \\
		&\leq \exp {\left(\epsilon^2 W +(\epsilon W + \epsilon^2 W) \sum_{r=1}^t \frac{f(j_r\mid S_{r-1})}{f(C \mid S^{t})}\right)} \\
		&\leq \exp {\left(\epsilon^2 W +(\epsilon W + \epsilon^2 W) \frac{f(S_t)}{f(C \mid S^{t})}\right)}.
	\end{align*}
	Combining with $\beta^t\ge e^{\epsilon (W-1)}$, we obtain that
	\begin{align*}
		\frac{\epsilon(W-1)-\epsilon^2 W}{\epsilon W + \epsilon^2 W}
		\leq \frac{f(S^t)}{f(C \mid S^{t})}.
	\end{align*}
	Note that $(\epsilon(W-1)-\epsilon^2W)/(\epsilon W + \epsilon^2 W )\ge (1-2\epsilon)/(1+\epsilon)\ge 1-3\epsilon$. Thus, we get 
	$f(S^t)\ge (1-3\epsilon)\cdot f(C\mid S_t)
	=(1-3\epsilon)(f(S_t\cup C)-f(S_t)) 
	$ and hence $f(S_t)\geq \frac{1}{2}(1-3\epsilon)\cdot f(S_t\cup C)$.
	
	Finally, Algorithm \ref{alg: multiplicative updates} runs at most $n$ rounds and makes $\mathcal{O}(n)$ queries at each round.
	Thus, it makes $\mathcal{O}(n^2)$ queries in total.
\end{proof}

\subsection{The Main Algorithm for Linear Packing Constraints}
\label{sec: main algorithm for linear packing}

The multiplicative updates algorithm itself can not produce a solution with a constant approximation ratio, due to the lack of monotonicity.
To overcome this difficulty, we apply the technique from \cite{GuptaRST10}.
The resulting algorithm is depicted as Algorithm \ref{alg: main algorithm for linear packing}.
It first invokes the multiplicative updates algorithm to obtain a solution $S_1$.
Then it runs the multiplicative updates algorithm again over the remaining elements $N\setminus S_1$ to obtain another solution $S_2$.
These two solutions still can not guarantee any constant approximation ratios.
To remedy this, the algorithm produces the third solution $S_1'$ by solving the unconstrained submodular maximization problem over $S_1$.
This problem can be solved by an optimal deterministic $1/2$-approximation algorithm \cite{BuchbinderF18} using $O(n^2)$ queries.
Our algorithm finally returns the maximum solution among $S_1$, $S_2$, and $S'_1$.
We will show that this achieves a $1/6-\epsilon$ approximation ratio.

\begin{algorithm}[H]
	\caption{Main Algorithm for Linear Packing Constraints}
	\begin{algorithmic}[1]
		\State \textbf{Input} $N, f, A, b,\epsilon$.
		\State $S_1\gets \mbox{Multiplicative-Updates}(N, f, A, b,e^{\epsilon W/3})$.
		\State $S_2\gets \mbox{Multiplicative-Updates}(N\setminus S_1, f, A, b,e^{\epsilon W/3})$.
		\State $S_1'\gets \mbox{Unconstrained-Submodular-Maximization}(S_1, f)$.
		\State \textbf{return} $\max \{S_1, S_2, S_1'\}$.
	\end{algorithmic}
	\label{alg: main algorithm for linear packing}
\end{algorithm}

\begin{theorem}
	For any fixed $\epsilon > 0$, assume that $W\ge \max \{9\ln m/\epsilon^2, 3/\epsilon \}$.
	Algorithm \ref{alg: main algorithm for linear packing} achieves a $1/6-\epsilon$ approximation ratio and uses $\mathcal{O}(n^2)$ queries.
	\label{thm: Constant Repeat Greedy for Linear Packing}
\end{theorem}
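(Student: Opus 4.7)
The plan is to bound $f(O)$ in terms of the three candidate values $f(S_1)$, $f(S_2)$, and $f(S_1')$. From Theorem~\ref{thm: multiplicative updates} applied with parameter $\epsilon/3$ (the width assumption $W\ge \max\{9\ln m/\epsilon^2, 3/\epsilon\}$ is exactly what the theorem requires after this substitution, and $\lambda=e^{\epsilon W/3}$ is the correct choice of threshold), I would first record the two greedy guarantees
\[ f(S_1)\ge \tfrac{1-\epsilon}{2}\,f(S_1\cup O), \qquad f(S_2)\ge \tfrac{1-\epsilon}{2}\,f(S_2\cup (O\setminus S_1)), \]
by plugging in the feasible sets $C=O$ and $C=O\setminus S_1\subseteq N\setminus S_1$ respectively. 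The deterministic $1/2$-approximation for unconstrained submodular maximization \cite{BuchbinderF18} applied on $S_1$ gives $f(S_1')\ge \tfrac{1}{2}\,f(O\cap S_1)$, since $O\cap S_1\subseteq S_1$. Applying submodularity to the disjoint sets $O\cap S_1$ and $O\setminus S_1$ together with $f(\emptyset)\ge 0$ gives the ``subadditivity on a partition'' bound $f(O)\le f(O\cap S_1)+f(O\setminus S_1)$.

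The main obstacle is upper bounding $f(O\setminus S_1)$, because non-monotonicity forbids the inequality $f(O\setminus S_1)\le f(S_2\cup (O\setminus S_1))$ that would make the guarantee for $S_2$ immediately useful. My key trick will be to apply submodularity \emph{simultaneously} to the pair $A=S_1\cup O$ and $B=S_2\cup (O\setminus S_1)$. Because $S_2\subseteq N\setminus S_1$, we have $S_1\cap S_2=\emptyset$ and hence $O\cap S_2\subseteq O\setminus S_1$, which yields
\[ A\cap B=(O\cap S_2)\cup (O\setminus S_1)=O\setminus S_1, \qquad A\cup B=S_1\cup S_2\cup O. \]
Submodularity then produces $f(S_1\cup O)+f(S_2\cup (O\setminus S_1))\ge f(S_1\cup S_2\cup O)+f(O\setminus S_1)\ge f(O\setminus S_1)$, using $f\ge 0$ in the last step. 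Substituting the two greedy inequalities recorded above converts this into the desired estimate
\[ f(O\setminus S_1)\le \tfrac{2}{1-\epsilon}\bigl(f(S_1)+f(S_2)\bigr). \]

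To finish, set $M=\max\{f(S_1),f(S_2),f(S_1')\}$ and combine the three estimates:
\[ f(O)\le f(O\cap S_1)+f(O\setminus S_1)\le 2f(S_1')+\tfrac{2}{1-\epsilon}\bigl(f(S_1)+f(S_2)\bigr)\le \Bigl(2+\tfrac{4}{1-\epsilon}\Bigr)M=\tfrac{6-2\epsilon}{1-\epsilon}M. \]
Hence $M\ge \tfrac{1-\epsilon}{6-2\epsilon}\,f(O)\ge \bigl(\tfrac{1}{6}-\epsilon\bigr)f(O)$ for all sufficiently small $\epsilon$, which is exactly the claimed approximation ratio. The query-complexity bound is immediate: each of the two calls to Multiplicative-Updates costs $\mathcal{O}(n^2)$ queries by Theorem~\ref{thm: multiplicative updates}, and the deterministic unconstrained maximization of \cite{BuchbinderF18} also runs in $\mathcal{O}(n^2)$ queries, so the total is $\mathcal{O}(n^2)$.
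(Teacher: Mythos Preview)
Your proof is correct and uses the same core ingredients as the paper: the two greedy guarantees from Theorem~\ref{thm: multiplicative updates}, the $1/2$-approximation for unconstrained maximization applied to $O\cap S_1$, the key submodularity step on the pair $S_1\cup O$ and $S_2\cup(O\setminus S_1)$ (your computation $A\cap B=O\setminus S_1$ is correct), and non-negativity to discard $f(S_1\cup S_2\cup O)$. The paper reaches the identical bound $\tfrac{1-\epsilon}{6-2\epsilon}$ but gets there via a case split on whether $f(S_1\cap O)\ge\delta\, f(O)$ followed by optimizing over $\delta$, whereas you add the three estimates directly; your route is marginally cleaner and avoids the auxiliary parameter. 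One small remark: your final ``for all sufficiently small $\epsilon$'' is unnecessarily cautious, since $\tfrac{1-\epsilon}{6-2\epsilon}\ge \tfrac{1}{6}-\epsilon$ holds for every $\epsilon\in(0,1)$ (and the claim is vacuous for $\epsilon\ge 1/6$).
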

\begin{proof}
	Let $O\in\arg\max\{f(S):Ax_S\leq b\}$.
	By Theorem \ref{thm: multiplicative updates}, we have
	\[ f(S_1)\geq \frac{1}{2}(1-\epsilon)\cdot f(S_1 \cup O) \mbox{ and } f(S_2)\ge \frac{1}{2}(1-\epsilon)\cdot f(S_2 \cup (O \setminus S_1)). \]
	If $f(S_1\cap O)\geq \delta\cdot f(O)$, then $f(S_1')\geq \frac{1}{2}\cdot f(S_1\cap O)\geq \frac{\delta}{2}\cdot f(O)$.
	If $f(S_1\cap O) \leq  \delta\cdot f(O)$, then
	\begin{align*}
		f(S_1)\ge \frac{1-\epsilon}{2} \cdot f(S_1 \cup O)
		\ge \frac{1-\epsilon}{2} \cdot (f(S_1 \cup O) + f(S_1\cap O) - \delta\cdot f(O)).
	\end{align*}
	We thus have that
	\begin{align*}
		\max\{f(S_1), f(S_2)\}
		&\ge \frac{1}{2} (f(S_1)+f(S_2)) \\
		&\ge \frac{1-\epsilon}{4} \cdot (f(S_1 \cup O ) + f(S_1\cap O) + f(S_2 \cup (O \setminus S_1))- \delta\cdot f(O)) \\
		&\ge \frac{1-\epsilon}{4} \cdot (f(S_1 \cup S_2 \cup O) + f(O\setminus S_1) + f(S_1\cap O) - \delta\cdot f(O)) \\
		&\ge \frac{1-\epsilon}{4} \cdot (f(S_1 \cup S_2 \cup O) + f(O) - \delta\cdot f(O)) \\
		&\ge \frac{(1-\epsilon)(1-\delta)}{4} \cdot f(O).
	\end{align*}
	The third and fourth inequalities hold due to submodularity.
	The last inequality holds due to non-negativity.
	Therefore, the approximation ratio of the returned set is at least $\max \{\delta/2, (1-\epsilon)(1-\delta)/4\}$.
	Let $\delta = (1-\epsilon)/(3-\epsilon)$.
	We get that the approximation ratio is at least $(1-\epsilon)/6$.
	
	Finally, since both the algorithm from \cite{BuchbinderF18} and Algorithm \ref{alg: multiplicative updates} make $\mathcal{O}(n^2)$ queries, Algorithm \ref{alg: Twin Greedy Knapsack} also makes $\mathcal{O}(n^2)$ queries in total.
\end{proof}

\section{Conclusion and Future Work}
\label{sec: conclusion}

In this paper, we propose deterministic algorithms with improved approximation ratios for non-monotone submodular maximization under a matroid constraint, a single knapsack constraint, and linear packing constraints, respectively.
We also show that the analysis of our knapsack algorithms is tight.

A central open question in this field is whether deterministic algorithms can achieve the same approximation ratio as randomized algorithms.
When the objective function is non-monotone, though our algorithms improve the best known deterministic algorithms, their approximation ratios are still worse than the best randomized algorithms.
When the function is monotone, the state-of-the-art deterministic algorithm for the matroid constraint achieves a $0.5008$ approximation ratio \cite{BuchbinderF019}, which is also smaller than the optimal $1-1/e$ ratio.
It is very interesting to fill these gaps.

%
%
%
\bibliographystyle{plain}
\bibliography{lib-reference}

\end{document}